\documentclass[11pt]{article}

\usepackage[margin=1in]{geometry}
\usepackage{amsmath,amssymb,amsthm}
\usepackage{booktabs,array,tabularx,longtable}
\usepackage{enumitem}
\usepackage{graphicx}
\usepackage{float}
\usepackage[section]{placeins}
\usepackage{tikz}
\usetikzlibrary{positioning}
\usepackage{microtype}
\usepackage{xcolor}
\usepackage{xurl}
\usepackage[authoryear,round]{natbib}
\usepackage[colorlinks=true,linkcolor=blue,citecolor=blue,urlcolor=blue]{hyperref}

\makeatletter
\def\input@path{{}{sections/}{tables/}{appendices/}}
\makeatother
\graphicspath{{figures/}}

\providecommand{\NAPreEvaluationEnd}{2023-01-25}
\providecommand{\NAPreEvaluationStart}{2016-04-27}

\providecommand{\NATerminalCells}{27}
\providecommand{\NATerminalEnd}{2025-04-29}
\providecommand{\NATerminalN}{240}
\providecommand{\NATerminalOrigins}{9}
\providecommand{\NATerminalStart}{2023-04-26}

\providecommand{\SPFCOVIDBMCoverage}{0.607}
\providecommand{\SPFCOVIDDirectCoverage}{0.500}
\providecommand{\SPFCOVIDRevisionAwareCoverage}{0.514}
\providecommand{\SPFCOVIDStripCoverage}{0.657}

\providecommand{\SPFInflationReleaseCycleN}{217}
\providecommand{\SPFInflationRevisionShare}{3.6}

\providecommand{\SPFPostCOVIDDirectCoverage}{0.915}
\providecommand{\SPFPostCOVIDRevisionAwareCoverage}{0.925}

\providecommand{\SPFPreEvaluationEnd}{2018-05-08}
\providecommand{\SPFPreEvaluationStart}{1996-03-02}
\providecommand{\SPFRealReleaseCycleN}{370}
\providecommand{\SPFRealRevisionShare}{8.2}

\providecommand{\SPFRevisionContrast}{4.7}

\providecommand{\SPFTerminalCells}{20}
\providecommand{\SPFTerminalEnd}{2025-11-11}
\providecommand{\SPFTerminalN}{578}
\providecommand{\SPFTerminalOrigins}{30}
\providecommand{\SPFTerminalStart}{2018-08-07}

\newtheorem{corollary}{Corollary}[section]

\newtheorem{proposition}{Proposition}[section]

\newtheorem{example}{Example}[section]

\title{Revision Risk in Real-Time Macroeconomic Forecasting}
\author{Yizhou (Kyle) Kuang\thanks{University of Manchester.
Email: \href{mailto:yizhou.kuang@manchester.ac.uk}
{yizhou.kuang@manchester.ac.uk}. I thank Kristoffer Nimark, Ekaterina Kazak,
Zebang Xu, and seminar participants for helpful comments. Any errors are my
own.}}
\date{July 2026}

\begin{document}

\maketitle

\begin{abstract}
Macroeconomic outcomes are revised after release, leaving multiple official
values against which the same forecast can be evaluated. I develop a
release-indexed analysis of forecast evaluation and uncertainty
characterization. I show that revisions can alter measured risk and
reverse forecast comparisons with issued forecasts held fixed. More importantly,
separate marginal histories of early errors and revisions do not
point-identify later-outcome uncertainty because they leave dependence
unrestricted. I characterize the pointwise sharp
Fr\'echet--Makarov bounds for this release-indexed problem and, for fixed finite
empirical marginals, derive the shortest interval that meets the coverage target
under every joint distribution consistent with those marginals.
For SPF median forecasts, roughly 180-day revisions average
\SPFRealRevisionShare\ percent of later-outcome MSE for real activity and
\SPFInflationRevisionShare\ percent for inflation. Out-of-sample evidence shows
that performance depends on marginal stability and later-error histories:
COVID instability limits SPF coverage, while national accounts favor direct or
revision-aware calibration.

\end{abstract}

\medskip
\noindent\textbf{Keywords:} Revision risk; real-time data; forecast
uncertainty; partial identification.

\section{Introduction}
\label{sec:introduction}

Macroeconomic forecasts are evaluated against official statistics that are
first published and then revised. Holding the forecast fixed, its measured
error changes when the first release is replaced by a later estimate. Which
release is appropriate depends on the purpose of the evaluation. First
releases are timely but incomplete;
later estimates may improve accuracy and consistency as source data arrive,
but can also reflect seasonal re-estimation, rebasing, reconciliation, or
methodological change. Availability, accuracy, and comparability need
not point to the same outcome, and equal release delays need not imply the same
remaining revision uncertainty. The consequences of the outcome rule can be
obscured when forecast errors are pooled across target periods.

The Bank of England's ongoing evaluation of its forecasts illustrates the issue.
Its 2026 Forecast Evaluation Report assesses the
Bank's Monetary Policy Report forecasts of GDP growth, CPI inflation, wage
growth, and unemployment over 2015--25 against the data vintage available three
years after the first release for that quarter when possible, and the latest
available vintage otherwise
\citep{bankOfEnglandForecastEvaluation2026}. This convention
allows recent forecasts to be included. Errors for older targets use the
three-year vintage, whereas errors for recent targets use an earlier vintage
whose release delay varies with the target. For the revisable series, a pooled
accuracy statistic then reflects both forecast performance and differences in
how far the outcomes have been revised. More concretely, one forecasting model
can have higher mean squared error than another under one fixed outcome vintage
but lower mean squared error when both are evaluated against a different vintage.

The problem is further complicated by the absence of a common terminal release
schedule across macroeconomic series. Some comprehensive revisions reach far
beyond the routine calendar:
BEA's 2013 update revised many U.S. national-account estimates back to 1929
\citep{bea2013revising}. The latest database value is consequently a moving,
ex-post benchmark rather than a dated final release. In addition, I do not
assume that successive vintages converge to a unique true value. Later estimates are
different official measurements whose source data and methodology may differ.

Evaluating a forecast against a later official estimate also creates an
inference problem. An interval intended to cover a one-year estimate would
ideally draw on past forecast errors measured against comparable one-year
estimates, but those errors enter the released history only after the one-year
delay. Errors measured against earlier releases and histories of revisions may
be longer. Taken separately, these histories reveal the distributions of early
errors and revisions but not how the two move together. Older target periods
may provide paired observations, but using their association for a current
forecast requires an economic or institutional argument that the revision
process is stable. A residual interval that ignores release stage is therefore
not neutral: it inherits the vintage composition and stability assumptions of
its error history.

The contribution of this paper is therefore to develop a release-indexed
analysis of forecast evaluation and uncertainty. I hold fixed the target period,
the issued forecast, and the information available when that forecast was made,
changing only the official estimate used as the outcome. This design isolates
how revisions alter both the risk of an already-issued forecast, defined as
expected loss for the specified outcome vintage, and comparisons between
forecasts. I provide a principled guide to choosing among interval procedures
according to whether the released record contains intended later errors,
separate histories of early errors and revisions, or paired observations, and
according to which stability and dependence assumptions are economically
defensible.

The real-time macroeconomic literature establishes that data vintages affect
forecasting, policy analysis, and empirical conclusions
\citep{croushoreStark2001,croushore2011,jacobsVanNorden2011}. Revision-aware
work estimates vintage-based systems and models revisions when constructing
forecasts, prediction intervals, or predictive densities
\citep{clements2017MacroUncertainty,clementsGalvao2013JAE,
clementsGalvao2019DataRevisionsForecasting,clementsGalvao2023BVARDataUncertainty,
carrieroClementsGalvao2015}. I instead condition on a forecast already issued and ask
how its risk and uncertainty change when the outcome vintage changes. This
distinction also separates the paper from fixed-event intervals, which follow
uncertainty as the forecast horizon to one event shortens
\citep{kruegerPlett2024}; here the forecast origin and target period stay fixed
while the measured outcome moves through its release cycle.

The interval problem draws on three distinct literatures. Classical
fixed-marginal results address unknown dependence in a sum
\citep{makarov1982,ruschendorf1982,frankNelsenSchweizer1987,
zhangRichardson2025}; related work studies fixed-marginal interval events and
optimization over couplings
\citep{bartlEtAl2022,zhang2025Dissertation,zhangRichardson2025ITE}.
Proper interval scores and empirical-quantile or adaptive calibration methods
address interval evaluation and construction
\citep{gneitingRaftery2007,vovkGammermanShafer2005,
chernozhukovWuthrichZhu2018,gibbsCandes2021}. Forecast-breakdown analysis asks
whether historical distributions remain representative
\citep{giacominiRossi2009}. These literatures supply the ingredients, but they do
not jointly specify which official estimate defines the forecast error, which
histories were available at the forecast origin, and which dependence and
stability restrictions support an uncertainty interval. I formulate this as a
release-indexed inference problem.

The theoretical results show that revisions alter fixed-forecast risk through
squared revisions and the error--revision interaction, and can change forecast
comparisons. Released-error calibration depends on how well the historical
error distribution represents the current one and on finite-sample quantile
effects. Marginal-only histories yield pointwise Fr\'echet--Makarov bounds and a
coupling-robust interval. For fixed empirical distributions, a linear program
computes the attained worst-case mass of a candidate interval, and a finite
endpoint search finds the shortest interval meeting that empirical coupling
requirement. Product-law and revision-model restrictions can support narrower
intervals under stronger assumptions. These results provide a disciplined
choice rule: intended-vintage errors are appropriate when representative, while
component histories require economic or institutional support for their
marginal stability and the restrictions used to combine them. A longer history
alone is not sufficient.

One headline empirical finding is that revisions are economically nontrivial.
For horizon-0 median forecasts in the Survey of Professional Forecasters (SPF), revisions
between the first release and the value available roughly 180 days later account
for \SPFRealRevisionShare\ percent of later-outcome mean squared error across
real-activity targets, compared with \SPFInflationRevisionShare\ percent across
inflation targets. The squared-revision component alone does not determine the change in
risk. At 180 days, the error--revision interaction is positive for the real-activity
target average and negative for the inflation target average, so revisions amplify
risk in the former and offset it in the latter. The averages also conceal substantial episode
heterogeneity. In the SPF sample, the
real-activity revision share is larger than the inflation share in the 2001
recession and the Great Recession, but the ordering reverses in the short 2020
recession window. The corresponding mean-forecast estimates are reported as a
robustness check in the Online Appendix.

The out-of-sample applications show why release delay alone does not determine
the appropriate procedure. In SPF, revision-aware calibration modestly improves
normalized interval score over direct later-outcome calibration, while
coupling-robust procedures exchange additional width for coverage. No method
absorbs the COVID scale and tail break. The same deterioration is visible in an
origin-observable high-volatility state fixed from the pre-evaluation sample,
and target-level score gains need not coincide with nominal coverage. In
national accounts, direct and revision-aware calibration already perform well
when the later-error history is informative, although the short evaluation
sample limits precision. The empirical conclusion is a conditional choice based
on the released histories and maintained stability restrictions, rather than a
general ranking of interval methods.

Section~\ref{sec:revision-risk-term-structure} defines release-cycle risk.
Section~\ref{sec:identification-inference} develops identification and interval
construction, Section~\ref{sec:mc-transport} studies their boundaries in
simulation, and Section~\ref{sec:empirical-applications} reports the
applications. Section~\ref{sec:conclusion} concludes.

\section{Release-Indexed Design and Revision Risk}
\label{sec:revision-risk-term-structure}

This section defines forecast errors by outcome vintage and shows how revisions
change the measured risk of an already-issued forecast. I hold fixed the target
period, the issued forecast, and the information available when it was issued,
changing only the released estimate used as the outcome. This isolates the
effect of outcome revisions from changes caused by training or refitting the
forecasting rule on a different data vintage.

Let \(s\) denote the forecast origin, \(t\) the target period, and
\(\mathcal I_s\) the information available at the origin. A feasible forecast
\(f_s(t)\) is \(\mathcal I_s\)-measurable. For an outcome vintage \(v\),
\(Y_t(v)\) is the released estimate and \(A_t(v)\) its availability date.
Four choices must then be kept distinct. The \emph{forecast-origin
information} is the data available when the forecast is issued. The
\emph{training vintage} specifies which versions of the historical data were
used to estimate the model that produced the forecast; the analysis below
holds that model and forecast fixed. The \emph{evaluation vintage} selects the
official outcome against which the forecast is compared. The
\emph{calibration-error vintage} selects which released past errors are used to
choose interval endpoints. Forecast construction and calibration may use only
information released by \(s\). The evaluation outcome can arrive later, but
its release rule must be specified; an undated latest-value series is instead
a retrospective benchmark.

For an early vintage \(v\) and a later vintage \(w\), define the forecast error
and revision

\[
 e_t(v)=Y_t(v)-f_s(t),
 \qquad
 \Delta_t(v,w)=Y_t(w)-Y_t(v).
\]

Then \(e_t(w)=e_t(v)+\Delta_t(v,w)\). Expectations below are unconditional
over the evaluation population unless conditioning is stated.

\begin{proposition}[Risk across outcome vintages]
\label{prop:outcome-vintage-risk}
Let the same forecast be evaluated against \(v\) and \(w\), and suppose
\(e_t(v)\) and \(\Delta_t(v,w)\) are square-integrable. With
\(R_v(f)=E[(Y_t(v)-f_s(t))^2]\),
\[
R_w(f)=R_v(f)+E[\Delta_t(v,w)^2]
       +2E[e_t(v)\Delta_t(v,w)].
\label{eq:tvpa-risk-decomposition}
\]
For two fixed forecast rules \(f_a,f_b\), if
\(D_v(a,b)=R_v(f_a)-R_v(f_b)\), then
\[
D_w(a,b)-D_v(a,b)
=-2E[\{f_{a,s}(t)-f_{b,s}(t)\}\Delta_t(v,w)].
\label{eq:tvpa-pairwise-decomposition}
\]
\end{proposition}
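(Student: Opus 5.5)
The plan is to use only the additive identity \(e_t(w)=e_t(v)+\Delta_t(v,w)\) stated just before the proposition, plus linearity of expectation. Square-integrability of \(e_t(v)\) and \(\Delta_t(v,w)\) will be used at the start to make every expectation finite. By Cauchy--Schwarz, \(E|e_t(v)\Delta_t(v,w)|\le (E[e_t(v)^2]E[\Delta_t(v,w)^2])^{1/2}<\infty\). Since \(e_t(w)\) is a sum of two square-integrable variables, it is also square-integrable. Hence \(R_w(f)\) is finite, and splitting expectations across sums is legitimate.

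For the first display, I will write \(R_w(f)=E[e_t(w)^2]=E[(e_t(v)+\Delta_t(v,w))^2]\). Expanding the square pointwise gives \(e_t(v)^2+\Delta_t(v,w)^2+2e_t(v)\Delta_t(v,w)\). Taking expectations term by term, and noting \(E[e_t(v)^2]=R_v(f)\), yields the risk decomposition. Because the forecast is held fixed across vintages, the error \(e_t(v)\) refers to the same \(f_s(t)\) in both risks. This is exactly what makes the identity hold with no refitting term.

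For the pairwise statement, I will apply the first display to each rule \(f_a\) and \(f_b\) separately, with \(e_{a,t}(v)=Y_t(v)-f_{a,s}(t)\) and similarly for \(b\). Subtracting gives
\[
D_w(a,b)-D_v(a,b)=\bigl\{E[\Delta_t^2]-E[\Delta_t^2]\bigr\}+2E[\{e_{a,t}(v)-e_{b,t}(v)\}\Delta_t(v,w)].
\]
The squared-revision terms cancel because the revision \(\Delta_t(v,w)\) depends only on the released outcomes, not on the forecast. The remaining factor satisfies \(e_{a,t}(v)-e_{b,t}(v)=-(f_{a,s}(t)-f_{b,s}(t))\), so the difference equals \(-2E[\{f_{a,s}(t)-f_{b,s}(t)\}\Delta_t(v,w)]\), as stated.

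I expect no genuine obstacle; the computation is elementary. The only step needing care is integrability in the pairwise part. The proposition assumes square-integrability for a single forecast, so I will read it as holding for both \(e_{a,t}(v)\) and \(e_{b,t}(v)\). Then \(f_{a,s}(t)-f_{b,s}(t)=e_{b,t}(v)-e_{a,t}(v)\) is square-integrable, and by Cauchy--Schwarz its product with \(\Delta_t(v,w)\) is integrable. I will state this convention explicitly so that the difference of risks is well defined and not of the form \(\infty-\infty\).
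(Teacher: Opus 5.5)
Your proof is correct and follows the paper's argument: expand \((e_t(v)+\Delta_t(v,w))^2\) and take expectations using the given square integrability, then apply the identity to \(f_a\) and \(f_b\) and subtract so that \(E[\Delta_t(v,w)^2]\) cancels. The paper leaves implicit that both \(e_{a,t}(v)\) and \(e_{b,t}(v)\) must be square-integrable for the pairwise identity; your stating this explicitly, with the Cauchy--Schwarz bound, is a small added care point rather than a different route.
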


The first identity explains why \(E[\Delta_t(v,w)^2]\) alone is insufficient. A
large revision can raise later-outcome risk when it reinforces the preliminary
error, or reduce risk when it offsets that error. The sign and size of the
error--revision interaction determine which case occurs. In particular, the
third component is the uncentered interaction
\[
2E[e_t(v)\Delta_t(v,w)]
=2\operatorname{Cov}\{e_t(v),\Delta_t(v,w)\}
 +2E[e_t(v)]E[\Delta_t(v,w)].
\]
It coincides with twice the centered covariance under the corresponding
mean-zero conditions. The proposition holds the forecast fixed and changes
only the evaluation vintage. Changing the training vintage or refitting the
rule adds a separate estimation effect, which this identity does not attribute
to revisions.

Figure~\ref{fig:release-cycle-main} traces the revision-risk share
\(E[\Delta_t(v,w)^2]/R_w(f)\) through the SPF release cycle. The main comparison uses the published horizon-0 SPF
median, which is the median forecast across respondents for each target and
forecast origin; it does not pool individual-forecaster records. Mean forecasts
provide the forecast-type robustness check. I group EMP, INDPROD, and RGDP as
real-activity targets and CPI and PCE as inflation targets. Within each target,
the first-release forecast is held fixed and
the outcome is replaced by the value available at 30, 60, 90, and 180 days,
one year, and the latest database date. I calculate the share separately for
each target, then take an equal-weight average across the three real-activity
targets and, separately, across the two inflation targets. At 180 days, revisions account for
\SPFRealRevisionShare\ percent of later-outcome mean squared error (MSE) for real activity and
\SPFInflationRevisionShare\ percent for inflation. The target-group difference
is \SPFRevisionContrast\ percentage points. Individual targets need not follow
this group-average pattern. The interaction share is positive for real activity
and negative for inflation at that horizon. The bands are moving-block
bootstrap intervals over forecast origins. Both shares are largest at the
ex-post latest value.

\begin{figure}[!t]
\centering
\includegraphics[width=0.88\linewidth]{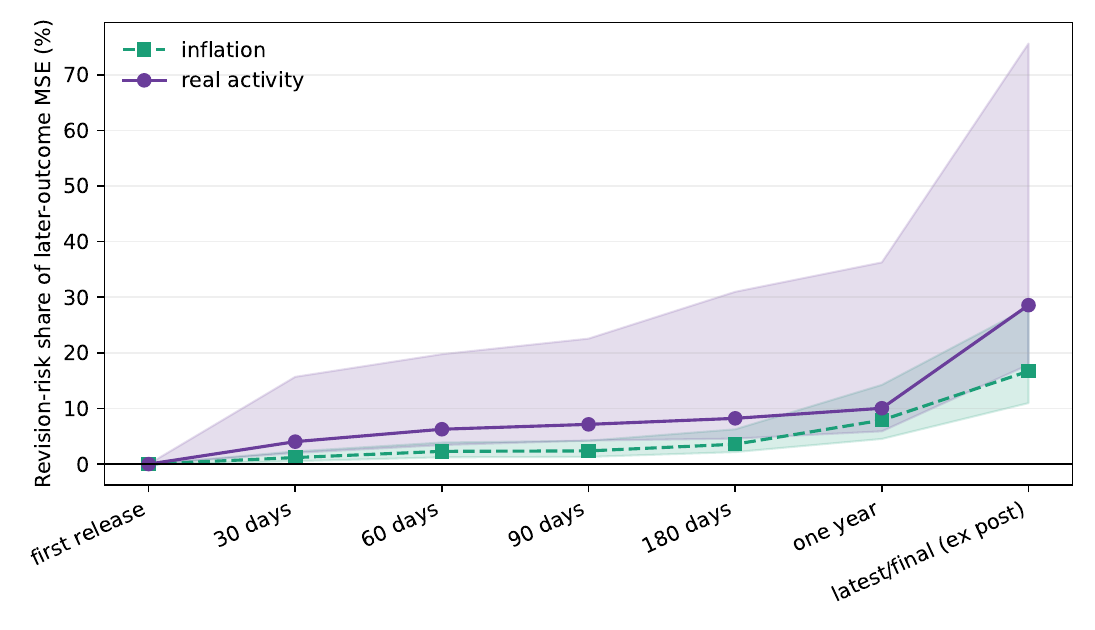}
\caption{Revision risk through the SPF release cycle}
\label{fig:release-cycle-main}
\begin{minipage}{0.92\linewidth}
\small Notes: Lines are equal-weight target averages for horizon-0 median
forecasts; shaded areas are 95 percent moving-block bootstrap intervals over
forecast origins. Each maturity contains \SPFRealReleaseCycleN\ real-activity
and \SPFInflationReleaseCycleN\ inflation target-origin observations. Fixed-day
vintages use the last release available by the cutoff; latest/final is ex-post.
Online Appendix~\ref{app:s3-release-cycle} and
\ref{app:s3-state-dependence} report target rows, interaction shares, and
episode dispersion.
\end{minipage}
\end{figure}

The release-cycle calculation is informative even when the revision share is
small. A large positive interaction signals that first-release uncertainty can
understate later-outcome risk. A negative interaction can offset revision
variance, making the two risk statements closer despite visible revisions.
The decomposition concerns average squared error. Constructing a later-vintage
interval additionally requires tail quantiles, the inference problem taken up
in Section~\ref{sec:identification-inference}.

Revision-risk shares also vary across macroeconomic states. Using quarterly
recession windows defined by the NBER chronology, the SPF sample includes the
1990--91 and 2001 recessions, the Great Recession, and the short 2020 recession
\citep{nberBusinessCycleChronology}. The episode estimates are heterogeneous:
the activity share is elevated in 2001 and 2007--09, while the 2020 window has
the opposite target-group ordering. Online
Appendix~\ref{app:s3-state-dependence} reports these estimates.

National accounts provide a complementary application with first, second,
third, and fixed one-year outcomes. In the pooled one-year results, the average
squared-revision component is small and partly offset by the interaction, so
first-release and one-year risk are close on average. Component paths remain
heterogeneous. Online Appendix~\ref{app:s3-release-cycle} reports the
national-account release-horizon rows and component groups, full SPF target
rows, and SPF horizon-1 robustness.

Section~\ref{sec:identification-inference} next links the released later-error,
component, and paired histories to the interval procedures and assumptions they
can support.

\section{Identification and Real-Time Uncertainty}
\label{sec:identification-inference}

The identity used to measure release-cycle risk now creates an information
problem for interval construction:
\[
e_t(w)=e_t(v)+\Delta_t(v,w).
\]
Fix nominal miscoverage \(\alpha\in(0,1)\), so the target coverage is
\(1-\alpha\). Here calibration means using past forecast errors to choose
interval endpoints for that stated coverage level.
At forecast origin \(s\), past errors measured against the intended later
vintage \(w\) may not yet have been released. The available record may instead
contain separate histories of early errors and revisions, or a shorter history
in which the two are paired. These records contain different information about
the distribution of the intended later error, so they support different
interval procedures. The evaluation vintage and calibration history remain
separate choices: every procedure below targets the same \(w\), even when it
learns from another released history.

Later-error delay is only part of the tradeoff. Representative errors measured
against \(w\) support direct calibration, while a revision-aware rule may
lengthen the history by mixing release stages. Separate early-error and revision
histories estimate their marginals but not their dependence. Paired
observations, independence, or a revision model add structure that may narrow
the interval. Marginal change threatens every procedure based on an
unrepresentative history; association change specifically threatens procedures
that extrapolate dependence, while the marginal-only benchmark retains all
couplings of the maintained marginals.

\subsection{Released Information and Direct Calibration}
\label{subsec:theory-information}

A released later-vintage error compares an issued forecast with the same
outcome vintage that a new interval is intended to cover. It is therefore the
most direct calibration object. Its history arrives only after that later
release, but delay need not imply scarcity in an application with a long
forecast archive. Release admissibility answers whether an error existed at
origin \(s\); representativeness asks whether the released history approximates
the current forecast's absolute-error distribution. The following result keeps those requirements
separate rather than folding them into an exchangeability claim. It
distinguishes three questions: how well the released sample estimates the
historical distribution it represents, whether that distribution still
describes the current forecast, and whether the chosen quantile contains an
atom.

For a past target period \(u\), let
\[
S_u(v)=|Y_u(v)-f_{s_u}(u)|,
\]
where \(s_u\) is its forecast origin. At origin \(s\), a calibration rule selects a
nonempty set \(\mathcal U_s(v)\subseteq\{u:A_u(v)\le s,\ u<t\}\), normalized
nonnegative weights \(\pi_{u,s}\), and a quantile level \(\tau_s\in(0,1)\). The indices,
weights, and quantile level are measurable with respect to a selection field
\(\mathcal P_s\subseteq\mathcal I_s\) fixed before the selected score
magnitudes are used. Define
\[
\widehat H_s(x;v)=\sum_{u\in\mathcal U_s(v)}
 \pi_{u,s}\mathbf 1\{S_u(v)\le x\},
\qquad
q_s=\inf\{x:\widehat H_s(x;v)\ge\tau_s\}.
\]
The selected observations may be heterogeneous. Their maintained weighted
population law is
\[
H_s(x;v)=\sum_{u\in\mathcal U_s(v)}\pi_{u,s}
 P\{S_u(v)\le x\mid\mathcal P_s\}.
\]
For the current forecast, let
\(F_s(x;v)=P\{S_t(v)\le x\mid\mathcal I_s\}\),
\(B_s=\|H_s-F_s\|_\infty\), and
\(a_s(q)=F_s(q;v)-\lim_{x\uparrow q}F_s(x;v)\).

\begin{proposition}[Released-error calibration]
\label{prop:released-error-track-b}
Suppose a released-history event satisfies
\[
\|\widehat H_s-H_s\|_\infty\le \varepsilon_s.
\]
Almost surely on that event, the interval
\(\Pi_s(v,t)=[f_s(t)-q_s,f_s(t)+q_s]\) satisfies
\[
\left|
P\{Y_t(v)\in\Pi_s(v,t)\mid\mathcal I_s\}-(1-\alpha)
\right|
\le \varepsilon_s+B_s+|\tau_s-(1-\alpha)|+a_s(q_s).
\label{eq:released-cdf-bound}
\]
\end{proposition}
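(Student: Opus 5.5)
The plan is to reduce coverage to the conditional distribution of the absolute score and then compare that distribution with the empirical one through a chain of CDF inequalities. Since \(Y_t(v)\in\Pi_s(v,t)\) iff \(S_t(v)=|Y_t(v)-f_s(t)|\le q_s\), and \(q_s\) is a function of the released scores, weights and level, it is \(\mathcal I_s\)-measurable. We also have \(\mathcal P_s\subseteq\mathcal I_s\) and the selected scores satisfy \(A_u(v)\le s\). Together these give, almost surely,
\[
P\{Y_t(v)\in\Pi_s(v,t)\mid\mathcal I_s\}=F_s(q_s;v).
\]
The threshold \(q_s\) can be plugged in as a fixed number because \(F_s(\cdot;v)\) is a regular conditional CDF and \(q_s\) is \(\mathcal I_s\)-measurable. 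So it suffices to bound \(|F_s(q_s)-(1-\alpha)|\) pointwise on the released-history event.

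Next I would use the defining property of the generalized inverse of the step function \(\widehat H_s\). Right-continuity gives \(\widehat H_s(q_s)\ge\tau_s\), and for every \(x<q_s\) we have \(\widehat H_s(x)<\tau_s\), so \(\widehat H_s(q_s-)\le\tau_s\). The case \(q_s\) finite is guaranteed since \(\tau_s<1\) and the weights are normalized over a nonempty finite set. Then I would chain sup-norm bounds separately through right limits and left limits, using the triangle inequality with \(H_s\):
\[
F_s(q_s)\ge \widehat H_s(q_s)-\varepsilon_s-B_s\ge\tau_s-\varepsilon_s-B_s,
\]
\[
F_s(q_s-)\le \widehat H_s(q_s-)+\varepsilon_s+B_s\le\tau_s+\varepsilon_s+B_s.
\]
The second line needs the sup-norm bounds to pass to left limits. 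This holds because \(|\widehat H_s(x)-F_s(x)|\le\varepsilon_s+B_s\) for all \(x\), so it also holds in the limit \(x\uparrow q_s\).

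Finally, I would write \(F_s(q_s)=F_s(q_s-)+a_s(q_s)\), which gives
\[
\tau_s-\varepsilon_s-B_s\le F_s(q_s)\le\tau_s+\varepsilon_s+B_s+a_s(q_s).
\]
Replacing \(\tau_s\) by \(1-\alpha\) costs at most \(|\tau_s-(1-\alpha)|\) on each side, which yields the displayed bound, since \(a_s\ge0\) makes it loose on the lower side.

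I expect the main obstacle to be the measure-theoretic bookkeeping rather than the inequalities. Two points need care. First, the substitution of the random \(q_s\) into the conditional CDF must be justified, which requires \(q_s\in\mathcal I_s\) and the regular version. Second, "almost surely on that event" has to be read correctly. \(H_s\) is conditional on \(\mathcal P_s\), so the event and \(B_s\) are random, and the bound is pathwise on the event up to null sets. I would handle both by fixing versions of the conditional CDFs at the start and noting that all comparisons are deterministic given them.
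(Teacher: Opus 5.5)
Your proposal is correct and follows the paper's proof essentially step for step. Both arguments reduce coverage to \(F_s(q_s;v)\) via the \(\mathcal I_s\)-measurability of \(q_s\), use the generalized-inverse properties \(\widehat H_s(q_s)\ge\tau_s\) and \(\lim_{x\uparrow q_s}\widehat H_s(x)\le\tau_s\), and transfer them to \(F_s\) through the triangle-inequality bound \(\varepsilon_s+B_s\), including at left limits; they then add the atom \(a_s(q_s)\) and absorb \(|\tau_s-(1-\alpha)|\). Your remarks on the finiteness of \(q_s\) and on substituting the random threshold into a regular conditional CDF only make explicit the bookkeeping the paper leaves implicit.
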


The four terms represent sampling discrepancy, historical-to-current
distribution change, finite-rank quantile error, and probability mass at the
cutoff. With a fixed rule and \(n\) equally weighted i.i.d.\ continuous scores
from the current law, \(B_s=a_s(q_s)=0\), and the
Dvoretzky--Kiefer--Wolfowitz--Massart inequality \citep{massart1990} gives
\[
 P\left\{\|\widehat H_s-H_s\|_\infty
 \le \sqrt{\frac{\log(2/\delta)}{2n}}\right\}\ge 1-\delta.
\]
Thus Proposition~\ref{prop:released-error-track-b} applies with
\(\varepsilon_s=\sqrt{\log(2/\delta)/(2n)}\) with probability at least
\(1-\delta\). Under exchangeability and no ties, the usual rank
\(k=\lceil(n+1)(1-\alpha)\rceil\le n\) instead gives marginal coverage
\(P\{S_t(v)\le S_{(k)}(v)\}=k/(n+1)\ge1-\alpha\), averaging over the calibration
sample and new score rather than conditioning on every realized sample
\citep{vovkGammermanShafer2005}. Serial dependence, unequal weights, and
distributional change require the corresponding terms or a different bound.

The implemented direct and revision-aware rules use the following empirical
quantiles. For a finite sample \(Z=(z_1,\ldots,z_n)\), write
\(\widehat F_Z\) for its empirical CDF and define
\[
 \begin{aligned}
 Q_p^-(Z)&=\max\{z\in\operatorname{supp}(Z):
              \lim_{x\uparrow z}\widehat F_Z(x)\le p\},\\
 Q_p^+(Z)&=\min\{z\in\operatorname{supp}(Z):\widehat F_Z(z)\ge p\}.
 \end{aligned}
\]
Let \(\mathcal Q_\alpha(Z)=[Q_{\alpha/2}^-(Z),Q_{1-\alpha/2}^+(Z)]\). If
\(|Z|_{(k)}\) is the \(k\)-th ordered absolute value, set
\[
 c_\beta(Z)=|Z|_{(k_\beta)},\qquad
 k_\beta=\min\{\lceil(n+1)\beta\rceil,n\}.
\]
With the released intended errors
\[
\mathcal E_s^w=(e_u(w):u<t,\ A_u(w)\le s),
\]
direct late absolute (DA) and direct late signed (DS) intervals are
\[
 \Pi_s^{DA}=f_s(t)+[-c_{1-\alpha}(\mathcal E_s^w),c_{1-\alpha}(\mathcal E_s^w)],
 \qquad
 \Pi_s^{DS}=f_s(t)+\mathcal Q_\alpha(\mathcal E_s^w).
\]
For revision-aware (RA) calibration, let \(v_{u,s}\) be the latest available
outcome vintage for target \(u\) at origin \(s\), set
\(\widetilde e_{u,s}=Y_u(v_{u,s})-f_{s_u}(u)\), and write
\(\mathcal E_s^{RA}=(\widetilde e_{u,s})\). The implemented interval is
\[
 \Pi_s^{RA}=f_s(t)+[-c_{1-\alpha}(\mathcal E_s^{RA}),
                         c_{1-\alpha}(\mathcal E_s^{RA})].
\]

Direct late calibration uses one common outcome vintage and pays the delay
required for those errors to be released. Revision-aware calibration instead
uses the latest available member of the stated early/later pair for each past
target. It is therefore a defined mixed-vintage calibration rule, not an
accidental change in the evaluation target. It can enlarge the usable score
history, but its coverage requires that mixed error distribution to represent
the current later-vintage error distribution. Neither rule becomes robust to marginal change
merely by using more released observations.

\subsection{Marginal Histories and Dependence Uncertainty}
\label{subsec:pi-final}

When intended later-vintage errors are scarce, the identity
\(e(w)=e(v)+\Delta(v,w)\) suggests using separate histories of early errors and
revisions to construct a later-outcome interval. I call this use of component
histories \emph{transport}.\footnote{The term is borrowed from the coupling and
optimal-transport literature.} A forecast producer may possess many
observations in each component history even when the relevant paired or
later-error history is less informative. Revision histories can also include
periods for which no archived forecast exists. Their empirical distributions
estimate the two current marginal laws only under a representativeness
condition, and even then do not determine whether revisions reinforce, offset,
or move independently of preliminary forecast errors.

At origin \(s\), write \(\mathcal H_s^e(v)\) for released early errors and
\(\mathcal H_s^\Delta(v,w)\) for released revisions. The marginal-only object
\(\mathcal H_s^{\mathrm{marg}}=(\mathcal H_s^e,\mathcal H_s^\Delta)\) uses
these histories without target-period pairing. Its numerical samples are
\[
 \mathcal E_s^v=(e_u(v):u<t,\ A_u(v)\le s),\qquad
 \mathcal D_s^{v,w}=(\Delta_u(v,w):u<t,\ A_u(v)\le s,\ A_u(w)\le s).
\]
The marginal-only benchmark does not assume that dependence in available pairs
remains representative; paired and hybrid histories are introduced in
Section~\ref{subsec:paired-history}.

Fix a sub-\(\sigma\)-field \(\mathcal G_s\subseteq\mathcal I_s\), chosen before
the current error and revision are observed, representing the origin-\(s\)
information on which their marginal laws are conditioned. It may equal
\(\mathcal I_s\) or be coarser, but the same \(\mathcal G_s\) is used for both
marginals. Assume regular conditional laws exist. Let \(F_{e,s}\) and
\(F_{\Delta,s}\) be versions of their right-continuous conditional CDFs, and let
\(\Gamma_s\) contain all conditional couplings of those marginals. The
statements below hold almost surely in \(\mathcal G_s\). Equivalently, for
almost every realization of that information, the two conditional marginals
are fixed and \(\Gamma_s\) ranges over their couplings. For
\(P(e+\Delta\le z)\), define
\[
\underline F_s(z)=\sup_x
 [F_{e,s}(x)+F_{\Delta,s}(z-x)-1]_+,
\quad
\overline F_s(z)=\inf_x
 \min\{F_{e,s}(x)+F_{\Delta,s}(z-x),1\}.
\]

\begin{proposition}[Pointwise population coupling bounds]
\label{prop:population-fm-track-b}
For each \(z\), almost surely in \(\mathcal G_s\),
\[
\inf_{\gamma\in\Gamma_s}P_\gamma(e+\Delta\le z\mid\mathcal G_s)
=\underline F_s(z),
\qquad
\sup_{\gamma\in\Gamma_s}P_\gamma(e+\Delta\le z\mid\mathcal G_s)
=\overline F_s(z).
\label{eq:fm-pointwise}
\]
The upper endpoint is attained. The lower endpoint need not be attained,
although it is attained when at least one marginal is discrete. Thus the
attainable scalar set at \(z\) is either
\([\underline F_s(z),\overline F_s(z)]\) or
\((\underline F_s(z),\overline F_s(z)]\).
\end{proposition}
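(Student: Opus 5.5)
The plan is to reduce to an unconditional fixed-marginal problem and then invoke, and partly reprove, the classical Makarov--R\"uschendorf--Frank--Nelsen--Schweizer bounds \citep{makarov1982,ruschendorf1982,frankNelsenSchweizer1987}. Because regular conditional laws exist, I fix a realization \(\omega\) of \(\mathcal G_s\) outside a null set. Then \(F_{e,s}\) and \(F_{\Delta,s}\) are two fixed univariate CDFs, and \(\Gamma_s\) is the set of all probability measures on \(\mathbb R^2\) with those marginals. Every statement then becomes a statement about deterministic marginals, which I write \(F_e,F_\Delta\). The only measurability point is that the bounds \(\underline F_s,\overline F_s\) are formed from the chosen versions of the CDFs. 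Because the CDFs are right-continuous, the \(\sup\) and \(\inf\) over \(x\) can be taken over rational \(x\). This makes both bounds \(\mathcal G_s\)-measurable and the ``almost surely'' qualifier harmless.

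\emph{Validity.} Take any \(\gamma\in\Gamma_s\) and any \(x\).
\begin{itemize}
\item Lower bound: since \(\{e\le x,\ \Delta\le z-x\}\subseteq\{e+\Delta\le z\}\), Bonferroni gives \(P_\gamma(e+\Delta\le z)\ge F_e(x)+F_\Delta(z-x)-1\). Taking the positive part and the supremum over \(x\) yields \(\underline F(z)\).
\item Upper bound: if \(e>x\) and \(\Delta>z-x\), then \(e+\Delta>z\). Hence \(\{e+\Delta\le z\}\subseteq\{e\le x\}\cup\{\Delta\le z-x\}\), so \(P_\gamma(e+\Delta\le z)\le \min\{F_e(x)+F_\Delta(z-x),1\}\). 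Taking the infimum over \(x\) yields \(\overline F(z)\).
\end{itemize}

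\emph{Sharpness and attainment of the upper endpoint.} Set \(p=\overline F(z)\), let \(U\sim\mathrm{Unif}(0,1)\), and let \(F^{-1}\) denote left-continuous quantiles. I use R\"uschendorf's rearrangement:
\begin{itemize}
\item on \(\{U<p\}\), set \(e=F_e^{-1}(U)\) and \(\Delta=F_\Delta^{-1}(p-U)\), a countermonotone pairing of the lower \(p\)-parts;
\item on \(\{U\ge p\}\), pair the upper parts arbitrarily, for example \(e=F_e^{-1}(U)\) and \(\Delta=F_\Delta^{-1}(1+p-U)\).
\end{itemize}
Both marginals are then correct. The key step is to show \(F_e^{-1}(u)+F_\Delta^{-1}(p-u)\le z\) for \(u\in(0,p)\). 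Suppose instead that the sum exceeds \(z\). Then one can choose \(x\) with \(F_e^{-1}(u)>x\) and \(F_\Delta^{-1}(p-u)>z-x\). This gives \(F_e(x)<u\) and \(F_\Delta(z-x)<p-u\), so \(F_e(x)+F_\Delta(z-x)<p\), contradicting the definition of \(p\) as an infimum. The argument needs a little care at ties, using right-continuity. Hence this coupling attains \(\overline F(z)\).

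\emph{The lower endpoint.} I apply the symmetric construction to the complementary event, which yields for every \(\epsilon>0\) a coupling with \(P(e+\Delta\le z)\le\underline F(z)+\epsilon\). Concretely, I apply the previous argument to \(P(e+\Delta<z')\) for suitable \(z'\), or equivalently to \((-e,-\Delta)\) and the strict event. This establishes the infimum as stated. Attainment can fail because the optimal rearrangement only controls the strict event \(\{e+\Delta<z\}\) or \(\{e+\Delta> z\}\), and the boundary \(\{e+\Delta=z\}\) may carry mass in the limit. Two uniform marginals should give an explicit example, which shows the half-open case occurs.

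\emph{Main obstacle: attainment of the lower endpoint when one marginal is discrete.} If \(e\) has countable support \(\{x_i\}\), the event decomposes as \(\{e+\Delta\le z\}=\bigcup_i\{e=x_i,\ \Delta\le z-x_i\}\). I plan to show that minimizing the mass of \(\{e+\Delta\le z\}\) is a transport problem in which, for each atom \(x_i\), one assigns \(\Delta\)-mass of weight \(P(e=x_i)\) to the set above the threshold \(z-x_i\). A greedy or quantile assignment then attains the optimum. This assignment fills atoms in decreasing order of \(x_i\) with the largest available \(\Delta\) values, and it achieves exactly \(\sup_x[F_e(x)+F_\Delta(z-x)-1]_+\) because the supremum is attained at an atom-boundary \(x\). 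Should the greedy argument stall, a fallback is tightness of \(\Gamma_s\) plus an approximation that shifts \(z\) by less than the gap to the next relevant support point.

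Finally, \(\Gamma_s\) is convex and \(\gamma\mapsto P_\gamma(e+\Delta\le z)\) is linear. The attainable set is therefore an interval. With the endpoint results above, it is \([\underline F,\overline F]\) or \((\underline F,\overline F]\), as stated.
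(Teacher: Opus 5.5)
Your validity argument and your closing convexity step match the paper's proof. The paper's mixing of a near-minimizing coupling with an upper-attaining one is your convex-image argument made explicit. Your quantile construction attaining \(\overline F_s(z)\) is correct and goes beyond the paper, which cites Makarov, R\"uschendorf, Frank--Nelsen--Schweizer and Zhang--Richardson for all sharpness and attainment claims.

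The genuine gap is the step you flag as the main obstacle: lower attainment when a marginal is discrete. Your greedy rule processes atoms in decreasing order of \(x_i\) and gives each the largest available \(\Delta\) values. That is the comonotone pairing, and it is not optimal. Take \(e\) uniform on \(\{0,1\}\), \(\Delta\) uniform on \(\{1/2,2\}\), and \(z=1\). Then \(\underline F(1)=0\), attained by pairing \(0\) with \(2\) and \(1\) with \(1/2\) (sums \(2\) and \(3/2\)). Your greedy instead pairs \(1\) with \(2\) and \(0\) with \(1/2\), leaving mass \(1/2\) on \(\{e+\Delta\le1\}\). That the supremum over \(x\) is attained at an atom boundary is a property of the bound; it does not produce a coupling. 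The fallback also fails. \(\Gamma_s\) is weakly compact, but for the closed set \(C=\{e+\Delta\le z\}\) weak convergence only gives \(\limsup_n\gamma_n(C)\le\gamma(C)\). That is the wrong direction for a minimum, and it is exactly the mechanism behind non-attainment: in your uniform example the natural minimizing couplings converge to the countermonotone one, which puts all mass on \(\{e+\Delta=1\}\). Moreover, with countably many atoms or a continuous \(\Delta\) there is in general no positive gap into which \(z\) can be shifted.

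Your lower-infimum step is also incomplete. Reflecting to \((-e,-\Delta)\) at \(z'>z\) gives a coupling with \(P(e+\Delta<z')=\sup_x[F_e(x^-)+F_\Delta((z'-x)^-)-1]_+\le\underline F(z')\). Reaching \(\underline F(z)+\epsilon\) therefore still needs right-continuity of \(\underline F\). This holds, because right-continuous CDFs are upper semicontinuous and the supremum can be taken over the compactified line, but you must argue it.

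Both steps are repaired by mirroring your upper construction on the top block. For \(q'\ge q=\underline F(z)\), set \(e=F_e^{-1}(U)\), with \(\Delta=F_\Delta^{-1}(U)\) on \(\{U\le q'\}\) and \(\Delta=F_\Delta^{-1}(1+q'-U)\) on \(\{U>q'\}\). Suppose \(u>q'\), \(a=F_e^{-1}(u)\), \(b=F_\Delta^{-1}(1+q'-u)\), and \(a+b\le z\). Then \(F_e(a)+F_\Delta(z-a)-1\ge F_e(a)-u+q'\ge q'\).
\begin{itemize}
\item If \(q'>q\), this contradicts the definition of \(q\). Hence \(P(e+\Delta\le z)\le q'\), which gives the infimum.
\item If \(q'=q\), it forces \(F_e(F_e^{-1}(u))=u\). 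For purely discrete \(e\) this happens only on the countable set \(\{F_e(x_i)\}\). So \(e+\Delta>z\) almost surely on \(\{U>q\}\), and \(q\) is attained.
\item If \(\Delta\) is the discrete marginal, use \(F_\Delta(b)>1+q-u\) for almost every \(u\) instead.
\end{itemize}

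A minor point: restricting the supremum defining \(\underline F_s\) to rational \(x\) is false. Take \(e\equiv\sqrt2\), \(\Delta\equiv0\), \(z=\sqrt2\): the bound is \(1\), but every rational \(x\) gives \(0\). The restriction is valid for the infimum defining \(\overline F_s\) by upper semicontinuity, and the almost-sure statement does not need it.
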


These are the classical fixed-marginal bounds for a sum
\citep{makarov1982,ruschendorf1982}; \citet{frankNelsenSchweizer1987} give the
corresponding copula formulation. They are stated here using the right-continuous
convention and the attainment correction in \citet{zhangRichardson2025}. For
each threshold they report the pointwise infimum and maximum probability,
with the stated qualification on attainment of the lower endpoint. The extremizing coupling
may change with \(z\), so the two envelope curves need not be the CDF of one
common coupling. The bounds use only the two marginal laws and leave their
association unidentified.

\begin{example}[Same marginals, different revision patterns]
\label{ex:finite-strip}
Consider the following finite marginals for an early error and a revision:
\[
 \widehat\mu_e=0.1\delta_{-1}+0.8\delta_0+0.1\delta_3,
 \qquad
 \widehat\mu_\Delta=0.1\delta_{-2}+0.8\delta_0+0.1\delta_3.
\]
Each can be read as the empirical distribution of ten equally weighted
observations; their joint pairing is left unspecified.
With rows indexed by \((-1,0,3)\) and columns by \((-2,0,3)\), two admissible
couplings are
\[
\pi^A=\begin{pmatrix}.1&0&0\\0&.8&0\\0&0&.1\end{pmatrix},
\qquad
\pi^B=\begin{pmatrix}0&0&.1\\0&.8&0\\.1&0&0\end{pmatrix}.
\]
Under \(\pi^A\), nonzero revisions reinforce the early-error signs; under
\(\pi^B\), they offset them. The two pairings imply, respectively,
\[
 e+\Delta\sim .1\delta_{-3}+.8\delta_0+.1\delta_6,
 \qquad
 e+\Delta\sim .8\delta_0+.1\delta_1+.1\delta_2.
\]
Thus identical component marginals can produce materially different
later-error distributions. The example shows what marginal-only information
omits: whether revisions reinforce or offset preliminary errors.
\end{example}

\begin{corollary}[Bonferroni--Makarov population interval]
\label{cor:bonferroni-makarov}
If endpoints \(\ell_s,r_s\) satisfy
\[
\lim_{x\uparrow\ell_s}\overline F_s(x)+1-\underline F_s(r_s)\le\alpha,
\]
then, for every \(\gamma\in\Gamma_s\),
\[
P_\gamma\{Y_t(w)\in[f_s(t)+\ell_s,f_s(t)+r_s]
\mid\mathcal G_s\}\ge1-\alpha.
\]
\end{corollary}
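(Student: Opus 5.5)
Write $Z=e+\Delta$, so that $Y_t(w)-f_s(t)=e_t(v)+\Delta_t(v,w)=Z$. The plan is a Bonferroni (union-bound) argument on the two tails of $Z$, with each tail controlled uniformly over couplings by Proposition~\ref{prop:population-fm-track-b}. For a fixed $\gamma\in\Gamma_s$, the miscoverage event $\{Z\notin[\ell_s,r_s]\}$ splits into the disjoint events $\{Z<\ell_s\}$ and $\{Z>r_s\}$. Hence
\[
P_\gamma\{Z\notin[\ell_s,r_s]\mid\mathcal G_s\}
=P_\gamma(Z<\ell_s\mid\mathcal G_s)+1-P_\gamma(Z\le r_s\mid\mathcal G_s).
\]
It then suffices to bound the two terms by $\lim_{x\uparrow\ell_s}\overline F_s(x)$ and $1-\underline F_s(r_s)$ respectively. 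These bounds hold almost surely in $\mathcal G_s$.

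The right tail follows directly from Proposition~\ref{prop:population-fm-track-b}: at $z=r_s$,
\[
P_\gamma(Z\le r_s\mid\mathcal G_s)\ge\inf_{\gamma'}P_{\gamma'}(Z\le r_s\mid\mathcal G_s)=\underline F_s(r_s).
\]
So the right-tail term is at most $1-\underline F_s(r_s)$.

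The left tail is the main point needing care, because the proposition is stated for $\{Z\le z\}$, not $\{Z<z\}$. I would write $\{Z<\ell_s\}=\bigcup_n\{Z\le \ell_s-1/n\}$. Continuity from below of the conditional law $P_\gamma(\cdot\mid\mathcal G_s)$, which is a regular conditional distribution, gives
\[
P_\gamma(Z<\ell_s\mid\mathcal G_s)=\lim_n P_\gamma(Z\le\ell_s-1/n\mid\mathcal G_s)\le\lim_n\overline F_s(\ell_s-1/n),
\]
where the inequality applies the upper bound of Proposition~\ref{prop:population-fm-track-b} at each $z=\ell_s-1/n$. The function $\overline F_s$ is nondecreasing in $z$, since $F_{\Delta,s}$ is nondecreasing and it is an infimum of nondecreasing functions. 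Therefore the left limit $\lim_{x\uparrow\ell_s}\overline F_s(x)$ exists and equals the limit along $\ell_s-1/n$.

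One almost-sure technicality needs handling. Proposition~\ref{prop:population-fm-track-b} holds for each fixed $z$ almost surely, so I would take a countable union of null sets over the sequence $z=\ell_s-1/n$ together with $z=r_s$. If $\ell_s$ and $r_s$ are themselves $\mathcal G_s$-measurable, I would instead argue pointwise for almost every realization of $\mathcal G_s$. There the marginals are fixed, and the Fréchet--Makarov inequality $P_\gamma(Z\le z)\le\overline F_s(z)$ holds for all $z$ simultaneously by the elementary bound $P(e+\Delta\le z)\le P(e\le x)+P(\Delta\le z-x)$. Adding the two tail bounds and using the hypothesis gives miscoverage at most $\alpha$, which is the claimed coverage of at least $1-\alpha$ for every $\gamma$.
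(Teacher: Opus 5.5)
Your proof is correct and follows the paper's argument: you split the miss event into the disjoint tails $\{e+\Delta<\ell_s\}$ and $\{e+\Delta>r_s\}$, bound the strict left tail by $\lim_{x\uparrow\ell_s}\overline F_s(x)$ and the right tail by $1-\underline F_s(r_s)$ for every coupling, and add. Your extra steps only make explicit what the paper's one-line left-limit inequality leaves implicit: continuity from below along $\ell_s-1/n$, monotonicity of $\overline F_s$, and the countable union of null sets.
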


The left miss for a closed interval is strict,
\(P(e+\Delta<\ell_s)=\lim_{x\uparrow\ell_s}P(e+\Delta\le x)\), whereas the right endpoint is
included through \(P(e+\Delta\le r_s)\). The corollary controls the two tails
separately and adds their pointwise worst-case bounds. It is a transparent
sufficient population interval, not a shortest-interval result. The fixed
equal-allocation version used in the applications, with \(\oplus\) denoting
interval addition,\footnote{For intervals \(I\) and \(J\),
\(I\oplus J=\{x+y:x\in I,\ y\in J\}\).} is
\[
 \Pi_s^{BM}=f_s(t)+
 \{\mathcal Q_{\alpha/2}(\mathcal E_s^v)
       \oplus\mathcal Q_{\alpha/2}(\mathcal D_s^{v,w})\}.
\]
To target 90 percent coverage, the rule assigns a 5 percent noncoverage
allowance to each component and adds the endpoints of the resulting 95 percent
component intervals. It uses separate released marginals, leaves dependence unrestricted, and can be
conservative because its tail allocation does not solve the joint strip event.

The finite-marginal alternative fixes the two released weighted empirical
distributions. Let \(e_i\) and \(d_j\) be observed early errors and revisions,
with nonnegative probability masses \(a_i\) and \(b_j\) satisfying
\(\sum_i a_i=\sum_j b_j=1\):
\[
\widehat\mu_{e,s}=\sum_{i=1}^{n_e}a_i\delta_{e_i},
\qquad
\widehat\mu_{\Delta,s}=\sum_{j=1}^{n_\Delta}b_j\delta_{d_j}.
\]
Let \(\mathcal T(a,b)\) be the matrices \(\pi_{ij}\ge0\) with row sums \(a_i\)
and column sums \(b_j\). For a closed strip \([\ell,r]\), define
\[
\widehat C_s^{\min}(\ell,r)=
\min_{\pi\in\mathcal T(a,b)}
\sum_{i,j}\pi_{ij}\mathbf 1\{\ell\le e_i+d_j\le r\}.
\label{eq:strip-lp}
\]

\begin{proposition}[Exact finite-empirical strip mass]
\label{prop:empirical-strip-ot-track-b}
Condition on the two empirical marginals defined above. Then
\(\widehat C_s^{\min}(\ell,r)\) is the attained minimum strip mass over all
couplings of these marginals. Hence
\([f_s(t)+\ell,f_s(t)+r]\) has empirical-marginal worst-case mass at least
\(1-\alpha\) if and only if
\(\widehat C_s^{\min}(\ell,r)\ge1-\alpha\).
\end{proposition}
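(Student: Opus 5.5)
The plan is to reduce the statement to two facts: the coupling set \(\mathcal T(a,b)\) is exactly the set of joint laws with the two empirical marginals, and a linear function attains its minimum on that set.

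\emph{Step 1: identify couplings with matrices.} Any coupling \(\gamma\) of \(\widehat\mu_{e,s}\) and \(\widehat\mu_{\Delta,s}\) is supported on the finite grid \(\{e_i\}\times\{d_j\}\), since each marginal charges only its atoms. Setting \(\pi_{ij}=\gamma(\{(e_i,d_j)\})\) therefore gives a matrix in \(\mathcal T(a,b)\). Conversely, every \(\pi\in\mathcal T(a,b)\) defines a coupling \(\sum_{ij}\pi_{ij}\delta_{(e_i,d_j)}\) with the required marginals.

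Repeated support points need care. If \(e_i=e_{i'}\) for some \(i\neq i'\), the map from matrices to measures is many-to-one. This does not matter: the objective depends only on the induced measure, and every measure has at least one matrix representative. Under this identification,
\[
P_\gamma(\ell\le e+\Delta\le r)=\sum_{i,j}\pi_{ij}\mathbf 1\{\ell\le e_i+d_j\le r\}.
\]
Hence the infimum of strip mass over couplings equals the infimum of this linear functional over \(\mathcal T(a,b)\).

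\emph{Step 2: attainment.} The set \(\mathcal T(a,b)\) is nonempty, since it contains the product coupling \(\pi_{ij}=a_ib_j\). It is also a bounded polytope: it is closed and defined by linear constraints, and every entry lies in \([0,1]\). A linear objective attains its minimum on a compact set, and the minimum can be taken at a vertex. So the minimum defining \(\widehat C_s^{\min}(\ell,r)\) exists and equals the attained minimal strip mass over all couplings. The closed-interval indicator is used exactly as defined, so no boundary limits arise, unlike the population Fr\'echet--Makarov bounds of Proposition~\ref{prop:population-fm-track-b}.

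\emph{Step 3: the equivalence.} The event \(Y_t(w)\in[f_s(t)+\ell,f_s(t)+r]\) equals \(\{\ell\le e+\Delta\le r\}\) by the identity \(e_t(w)=e_t(v)+\Delta_t(v,w)\), with the forecast held fixed. Worst-case mass at least \(1-\alpha\) means \(P_\gamma(\text{strip})\ge1-\alpha\) for every coupling \(\gamma\), which is equivalent to the infimum being at least \(1-\alpha\). By Step 2 that infimum is the attained minimum \(\widehat C_s^{\min}(\ell,r)\), so the two conditions coincide.

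The main obstacle is the bookkeeping in Step 1. The argument must state that the result is conditional on the realized empirical marginals, treating them as fixed discrete laws; this is a finite-sample statement, not a claim about the population conditional marginals. It must also handle ties among atoms cleanly, so that "all couplings" as measures and as matrices agree. Attainment itself is routine compactness.
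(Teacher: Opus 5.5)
Your proposal is correct and follows the paper's proof. Both identify couplings with the transportation polytope \(\mathcal T(a,b)\), use nonemptiness (the product coupling) and compactness to obtain attainment of the linear strip functional, and translate by \(f_s(t)\). The differences are minor. The paper merges repeated support points first; you instead note that the matrix-to-coupling map is onto, which follows by splitting each merged atom's mass in proportion to the \(a_i\) and \(b_j\). You also omit the LP-duality and closed-interval-image remarks the paper appends, which the statement does not require.
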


\begin{corollary}[Shortest finite-empirical strip]
\label{cor:empirical-strip-shortest-track-b}
Let \(\mathcal S_s=\{e_i+d_j\}_{i,j}\). A solution of
\[
\min_{\ell,r\in\mathcal S_s:\ell\le r}
\{r-\ell:\widehat C_s^{\min}(\ell,r)\ge1-\alpha\}
\]
is globally minimum-width among closed intervals satisfying the same fixed
empirical-marginal coupling requirement.
\end{corollary}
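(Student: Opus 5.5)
The plan is to reduce an arbitrary feasible closed interval to one whose endpoints lie in the finite support set \(\mathcal S_s\), without increasing width and without reducing worst-case mass. Existence and optimality then follow from the finite search. The key observation is that \(\widehat C_s^{\min}(\ell,r)\) depends on \((\ell,r)\) only through the index set \(J(\ell,r)=\{(i,j):\ell\le e_i+d_j\le r\}\). By Proposition~\ref{prop:empirical-strip-ot-track-b}, it is the attained minimum over the fixed polytope \(\mathcal T(a,b)\) of a linear objective whose coefficients are the indicators of \(J(\ell,r)\). Two intervals selecting the same atoms of \(\mathcal S_s\) therefore have identical worst-case mass. Shrinking the selected set can only lower the mass, since the objective is monotone in the indicator set, but we never need to shrink it.

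First, take any closed \([\ell,r]\) with \(\widehat C_s^{\min}(\ell,r)\ge1-\alpha\). If \(J(\ell,r)\) were empty, the objective would be identically zero. Feasibility with \(1-\alpha>0\) therefore forces \(J\ne\emptyset\). Set \(\ell'=\min\{e_i+d_j:(i,j)\in J\}\) and \(r'=\max\{e_i+d_j:(i,j)\in J\}\). Both lie in \(\mathcal S_s\) and satisfy \(\ell\le\ell'\le r'\le r\). Moreover \(J(\ell',r')=J(\ell,r)\): every pair in \(J\) has sum in \([\ell',r']\), and any pair with sum in \([\ell',r']\subseteq[\ell,r]\) is already in \(J\). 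Hence \(\widehat C_s^{\min}(\ell',r')=\widehat C_s^{\min}(\ell,r)\ge1-\alpha\) and \(r'-\ell'\le r-\ell\). So every feasible closed interval is dominated in width by a feasible interval with endpoints in \(\mathcal S_s\) and \(\ell'\le r'\).

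Second, the search set \(\{(\ell,r)\in\mathcal S_s^2:\ell\le r\}\) is finite. It is nonempty among feasible pairs, because taking \(\ell=\min\mathcal S_s\) and \(r=\max\mathcal S_s\) gives \(J\) equal to all pairs and mass \(\sum\pi_{ij}=1\ge1-\alpha\). A minimizer therefore exists. If some closed interval satisfying the requirement were strictly shorter than the solution, the reduction step would produce a feasible support-endpoint pair that is also strictly shorter, contradicting minimality over the finite set. This gives global minimum width.

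The only delicate point is closedness. The reduction relies on \(\mathcal S_s\) being finite, so that the min and max over \(J\) are attained atoms and the reduced interval still contains every selected atom. The closed-strip convention in \eqref{eq:strip-lp} matches this. I also rely on Proposition~\ref{prop:empirical-strip-ot-track-b} for the fact that the minimum is attained, so that equality of index sets gives equality of worst-case masses and not merely of infima. I expect no substantive obstacle beyond stating this set-equality argument carefully.
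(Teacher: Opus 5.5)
Your proposal is correct and follows essentially the same route as the paper: shrink any feasible closed strip to the smallest and largest pairwise sums it contains, observe that the transportation program's cost matrix is unchanged (so worst-case mass is identical) while width does not increase, and conclude by finite search over \(\mathcal S_s^2\). Your explicit check that \([\min\mathcal S_s,\max\mathcal S_s]\) is feasible (it has mass one) fills in the nonemptiness that the paper leaves implicit when it says the finite minimum is attained. Strictly, the mass equality does not need attainment from Proposition~\ref{prop:empirical-strip-ot-track-b}, since identical cost matrices define identical programs.
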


The transportation polytope is compact, so the linear strip objective attains
a minimum. Because its indicator matrix changes only when an endpoint crosses
a pairwise sum, the corollary reduces the shortest-strip calculation to a
finite search. Unlike the population pointwise bounds, this program treats the
two tails as one joint event: one coupling must allocate mass to both misses at
the same time. The implemented interval is
\[
\Pi_s^{OT}=f_s(t)+[\widehat\ell_s,\widehat r_s],
\]
where \((\widehat\ell_s,\widehat r_s)\) solves the finite program above.

Exactness is conditional on the two fixed empirical marginals. The program
removes coupling and endpoint-search approximation for those distributions,
but does not correct their sampling error, marginal drift, or vintage
composition. Related fixed-marginal interval and optimal-transport problems
appear in \citet{bartlEtAl2022}, \citet{zhang2025Dissertation}, and
\citet{zhangRichardson2025ITE}; the object here is the closed strip formed by
release-admissible early forecast errors and revisions.

\noindent\emph{Example~\ref{ex:finite-strip}, continued.}
At the 90 percent level, the pointwise population bounds give
\(\lim_{x\uparrow-3}\overline F(x)=0\) and \(\underline F(3)=0.9\), so
Corollary~\ref{cor:bonferroni-makarov} gives at least 90 percent coverage for
\([-3,3]\). The implemented
fixed-allocation Bonferroni rule instead forms 95 percent component intervals
\([-1,3]\) and \([-2,3]\), hence reports \([-3,6]\). For the fixed empirical
marginals, exact enumeration over the pairwise sums
\(\{-3,-2,-1,0,1,2,3,6\}\) gives
\(\widehat C_s^{\min}(-3,3)=0.9\), and no shorter closed strip attains 0.9. The
shortest empirical strip is therefore \([-3,3]\). That it agrees with the
population-bound interval is particular to this example; in general the
strip program treats the two tails as one joint event.

The width of marginal-only intervals is the cost of leaving dependence
unrestricted. A paired history or a revision model can narrow the interval,
but only by treating additional structure as representative for the current
forecast.

\subsection{Paired Histories and Additional Structure}
\label{subsec:paired-history}

The paired history \(\mathcal H_s^{\mathrm{pair}}\) contains
\((e_u(v),\Delta_u(v,w))\) for target periods \(u\) for which both entries have
been released. Such observations reveal historical association, but using that
association for the current forecast requires a stability condition. The
paired sample can also be shorter than either marginal history. A hybrid
information set therefore uses the paired sample for dependence and the longer
unpaired histories for the marginals.

For the released pairs \((e_k^p,d_k^p)_{k=1}^m\), let \(R_k^e,R_k^\Delta\) be their
ranks among the early errors and revisions, respectively, and define
\[
 U_k=\frac{R_k^e}{m+1},\qquad V_k=\frac{R_k^\Delta}{m+1},\qquad
 Z_k^H=Q_{U_k}^+(\mathcal E_s^v)+Q_{V_k}^+(\mathcal D_s^{v,w}).
\]
The hybrid empirical-copula interval is
\[
\Pi_s^H=f_s(t)+\mathcal Q_\alpha((Z_k^H)_{k=1}^m).
\]
It uses longer samples for the marginal quantiles and paired ranks for
dependence. Its potential gain is narrower intervals when historical rank
association is informative; its principal failure mode is a change in that
association or an unrepresentative paired sample.

\begin{proposition}[Convolution under product laws]
\label{prop:product-law-track-b}
If, conditional on \(\mathcal G_s\), the joint law of \(e\) and \(\Delta\) is
the product of its conditional marginals, then
\[
P(e+\Delta\le z\mid\mathcal G_s)
=\int F_{e,s}(z-d\mid\mathcal G_s)\,dF_{\Delta,s}(d\mid\mathcal G_s).
\]
Alternatively, let \(\widehat r_s(t)\) be \(\mathcal G_s\)-measurable and set
\(\xi=\Delta-\widehat r_s(t)\). If the conditional joint law of \(e\) and
\(\xi\) is the product of their conditional marginals, then
\[
P(e+\Delta\le z\mid\mathcal G_s)
=\int F_{e,s}(z-\widehat r_s(t)-x\mid\mathcal G_s)
  \,dF_{\xi,s}(x\mid\mathcal G_s).
\]
\end{proposition}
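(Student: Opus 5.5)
The plan is to prove both displays with Fubini's theorem applied to the regular conditional law. By assumption this law exists and, under the stated restriction, factors as a product measure. Fix a version \(\kappa_s\) of the regular conditional joint law of \((e,\Delta)\) given \(\mathcal G_s\). Under the product hypothesis,
\[
\kappa_s(\omega,\cdot)=\mu_{e,s}(\omega,\cdot)\otimes\mu_{\Delta,s}(\omega,\cdot)
\]
for almost every \(\omega\), where \(\mu_{e,s}\) and \(\mu_{\Delta,s}\) are the conditional marginals whose right-continuous CDFs are \(F_{e,s}\) and \(F_{\Delta,s}\). Since \(\kappa_s\) is a regular conditional law, we have, almost surely,
\[
P(e+\Delta\le z\mid\mathcal G_s)=\kappa_s(\{(x,d):x+d\le z\}).
\]
The half-plane \(\{x+d\le z\}\) is Borel in \(\mathbb R^2\), and its indicator is nonnegative and jointly measurable. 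Tonelli's theorem therefore lets me integrate first over \(x\) for fixed \(d\). The inner integral equals \(\mu_{e,s}((-\infty,z-d])=F_{e,s}(z-d)\). Integrating this over \(d\) against \(\mu_{\Delta,s}\) gives the first display.

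For the second display, condition on \(\mathcal G_s\) and treat \(\widehat r_s(t)\) as a constant, which is allowed because it is \(\mathcal G_s\)-measurable. Write \(e+\Delta=e+\widehat r_s(t)+\xi\), so that
\[
\{e+\Delta\le z\}=\{e+\xi\le z-\widehat r_s(t)\}.
\]
Take the regular conditional law of \((e,\xi)\) to be the image of the law of \((e,\Delta)\) under the map \((x,d)\mapsto(x,d-\widehat r_s(\omega))\). This map is jointly measurable in \((\omega,x,d)\), so the image is again a regular conditional law, and by hypothesis it is a product. Repeating the Tonelli step with threshold \(z-\widehat r_s(t)\) gives the integrand \(F_{e,s}(z-\widehat r_s(t)-x)\) integrated against \(dF_{\xi,s}(x)\).

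The main obstacle is the measurability bookkeeping for the \(\mathcal G_s\)-measurable shift. I need to check that substituting the random constant \(\widehat r_s(t)\) inside the conditional probability is legitimate, and that the resulting integrals are \(\mathcal G_s\)-measurable versions of the conditional CDF. I would handle both points with the standard disintegration lemma: if \(W\) is \(\mathcal G_s\)-measurable and \(h\) is jointly measurable and nonnegative, then
\[
E[h(W,X)\mid\mathcal G_s](\omega)=\int h(W(\omega),x)\,\kappa(\omega,dx)\quad\text{a.s.}
\]
Finally, I would note that the equalities hold only almost surely in \(\mathcal G_s\), consistent with the version-dependence already acknowledged before Proposition~\ref{prop:population-fm-track-b}. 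The choice of right-continuous CDF versions does not matter, since the integrals identify probabilities of closed half-lines.
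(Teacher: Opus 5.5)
Your proposal is correct and follows the paper's route. Under the conditional product law you integrate the half-plane indicator iteratively (Tonelli/Fubini) to obtain \(\int F_{e,s}(z-d)\,dF_{\Delta,s}(d)\), and for the second display you treat \(\widehat r_s(t)\) as a constant shift because it is \(\mathcal G_s\)-measurable. Your kernel and disintegration bookkeeping spells out what the paper's proof compresses into ``iterated integration'' and ``because \(\widehat r_s(t)\) is \(\mathcal G_s\)-measurable.''
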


The first product law point-identifies the conditional sum distribution only
under the maintained independence restriction. The implemented
independence-convolution interval forms all Cartesian sums,
\[
 \mathcal Z_s^I=(e_i+d_j)_{i,j},\qquad
 \Pi_s^I=f_s(t)+\mathcal Q_\alpha(\mathcal Z_s^I).
\]
It can be narrower than an arbitrary-coupling interval because it replaces the
unknown coupling by the product law. Dependence between preliminary errors and
revisions is therefore its principal failure mode.

Revision-model transport adds a released-information revision forecast
\(\widehat r_s(t)\). Let
\(\xi_u=\Delta_u(v,w)-\widehat r_u\) be pseudo-real-time revision residuals,
where each \(\widehat r_u\) is estimated only from revisions then available.
Let \((\xi_j)_{j=1}^{n_\xi}\) collect the residuals released by origin \(s\).
It uses
\[
 \mathcal Z_s^{RM}=(e_i+\xi_j)_{i,j},\qquad
 \Pi_s^{RM}=f_s(t)+\widehat r_s(t)+\mathcal Q_\alpha(\mathcal Z_s^{RM}).
\]
Recentering can improve efficiency when revisions are predictable. The
narrower interval requires the revision forecast to use only information
available at origin \(s\), and it requires the early error and revision residual
to combine as assumed by the convolution. It can fail if the revision model
changes or the residual dependence is misspecified.

\subsection{Practical Choice Among Interval Procedures}
\label{subsec:implemented-intervals}
\label{sec:transport-usefulness}

For adaptation after delayed feedback, I also consider adaptive conformal
inference (ACI).\footnote{Conformal prediction calibrates an interval using the
empirical distribution of past prediction errors
\citep{vovkGammermanShafer2005,gibbsCandes2021}.} Let \(j\) index feedback
events in the order that specified later outcomes are released, and let
\(M_j\) indicate whether the interval revealed at event \(j\) missed its
outcome. With a fixed step size \(\gamma>0\), update
\[
 \alpha_{j+1}=\operatorname{proj}_{[0.01,0.50]}
 \{\alpha_j+\gamma(\alpha-M_j)\}.
\]
If \(J(s)\) is the number of feedback events released by forecast origin \(s\),
the interval issued at that origin is
\[
 \Pi_s^{ACI}=f_s(t)+[-c_{1-\alpha_{J(s)}}(\mathcal E_s^w),
                          c_{1-\alpha_{J(s)}}(\mathcal E_s^w)].
\]
ACI neither identifies component dependence nor protects against an
unrepresented current marginal break.

Procedure choice follows the released histories and defensible restrictions.
Representative intended-vintage errors support direct calibration, while
separate component histories require stable marginals. Narrower paired,
independence-based, or revision-model procedures additionally require credible
dependence or predictive assumptions; ACI instead addresses feedback
instability after later outcomes are observed.

Online Appendix~\ref{app:s2-algorithms} records implementation details. The
Monte Carlo design next varies delay, dependence change, marginal change, and
predictable revisions.

\section{Monte Carlo Evidence}
\label{sec:mc-transport}

The simulations separate dependence uncertainty from marginal instability.
Each replication has 220 periods, of which the last 40 are evaluated. The
forecast, revision predictor, early error, and revision are
\[
\begin{aligned}
 f_t&=0.7X_t, &X_t&=0.65X_{t-1}+\varepsilon_t^X,\\
 H_t&=0.55H_{t-1}+\varepsilon_t^H,
 &e_t(v)&=\sigma_{e,t}U_t,\\
 \Delta_t(v,w)&=\beta_\Delta H_t+\sigma_{\Delta,t}
 \left\{\rho_tU_t+(1-\rho_t^2)^{1/2}V_t\right\}.
\end{aligned}
\]
In the baseline, \(U_t,V_t\) are independent standard normals,
\(\varepsilon_t^X\sim N(0,0.8^2)\), and
\(\varepsilon_t^H\sim N(0,0.7^2)\). The outcomes are
\(Y_t(v)=f_t+e_t(v)\) and \(Y_t(w)=Y_t(v)+\Delta_t(v,w)\); \(H_t\) is observed
at the forecast origin. They are released at \(t+1\) and \(t+d\), respectively.

Table~\ref{tab:track-b-monte-carlo-main} reports five regimes. The stable case sets
\((\sigma_e,\sigma_\Delta,\rho,\beta_\Delta,d)=(1,0.45,0.35,0,12)\).
The copula shift changes \(\rho\) from 0.65 to \(-0.45\) at evaluation while
holding both marginals fixed. The early-margin shift doubles \(\sigma_e\) from
1 to 2. Predictable revisions set \(\beta_\Delta=0.90\). The delay-and-tail
break sets \(d=60\) and replaces evaluation-period \(U_t\) by
\(2.5T_t/\sqrt{3}\), \(T_t\sim t_3\). Other parameters retain their stable
values. Histories enter only after release; tuning and seeds are fixed before
evaluation; and score comparisons use common feasible origins. Online
Appendix~\ref{app:s3-monte-carlo} reports the additional regimes and
procedures, exact parameters, and fixed seeds. The complete grid, tail misses,
and feasibility rates are retained in the replication materials.

\begin{table}[!ht]
\centering
\small
\setlength{\tabcolsep}{2.7pt}
\caption{Monte Carlo coverage and normalized interval score on common origins}
\label{tab:track-b-monte-carlo-main}
\begin{tabular}{@{}llcccccc@{}}
\toprule
Regime & Measure & Direct & \shortstack{Revision\\aware} & B--M & \shortstack{Strip\\OT}
& \shortstack{Revision\\model} & \shortstack{Adaptive\\direct} \\
\midrule
Stable & Coverage & 0.890 & 0.885 & 0.933 & 0.829 & 0.817 & 0.881 \\
 & Norm. score & 4.281 & 4.304 & 4.518 & 4.836 & 4.676 & 4.420 \\
\addlinespace[3pt]
Copula shift & Coverage & 0.985 & 0.983 & 0.988 & 0.940 & 0.938 & 0.979 \\
 & Norm. score & 3.329 & 3.289 & 3.928 & 3.400 & 3.036 & 3.247 \\
\addlinespace[3pt]
Early-margin shift & Coverage & 0.660 & 0.675 & 0.858 & 0.792 & 0.750 & 0.702 \\
 & Norm. score & 9.945 & 9.739 & 8.083 & 8.962 & 9.240 & 9.594 \\
\addlinespace[3pt]
Predictable revision & Coverage & 0.902 & 0.898 & 0.971 & 0.881 & 0.821 & 0.898 \\
 & Norm. score & 3.996 & 3.997 & 4.657 & 4.339 & 3.699 & 4.045 \\
\addlinespace[3pt]
Delay + tail break & Coverage & 0.719 & 0.713 & 0.823 & 0.773 & 0.750 & 0.717 \\
 & Norm. score & 12.519 & 12.668 & 11.079 & 11.928 & 12.507 & 12.795 \\
\bottomrule
\end{tabular}
\begin{minipage}{0.98\linewidth}
\footnotesize Notes: Each regime has separate coverage and normalized interval-score
rows; lower scores are better. Methods are evaluated on
identical feasible origins. B--M denotes Bonferroni--Makarov transport, and
strip OT denotes exact-strip optimal transport. Online
Appendix~\ref{app:s3-monte-carlo} reports the additional regimes and
procedures and the fixed data-generating-process parameters; the complete grid
is retained in the replication materials.
\end{minipage}
\end{table}

The simulations show which assumptions matter. Direct calibration remains
competitive when intended-vintage errors are representative. Predictable
revisions can improve a revision model's score, although here that gain
accompanies undercoverage. A marginal change exposes every procedure that
extrapolates the old distribution. Exact strip OT solves the fixed
empirical-marginal problem, but those marginals must represent the next outcome.
Economic and institutional knowledge is therefore needed before choosing a
procedure.

\section{Empirical Applications}
\label{sec:empirical-applications}

SPF and national accounts provide contrasting information cases. SPF combines
meaningful revision contributions with marginal instability around COVID,
whereas national accounts have a small average one-year revision contribution
and informative later-error histories. In both applications, the target pairs,
procedures, tuning choices, and nominal 90 percent level are fixed before the
later-period evaluation, which provides the primary empirical evidence.

\subsection{Design and later-period evaluation}

The SPF application uses Philadelphia Fed mean and median forecasts for CPI,
EMP, INDPROD, PCE, and RGDP at horizon 0 (the current quarter) and horizon 1
(one quarter ahead). The outcome pair is first release to the value available
roughly 180 days later. The pre-evaluation sample runs from
\SPFPreEvaluationStart{} to \SPFPreEvaluationEnd{} and the evaluation sample from
\SPFTerminalStart{} to \SPFTerminalEnd, with \SPFTerminalN{} observations over
\SPFTerminalOrigins{} origins and \SPFTerminalCells{}
target--horizon--forecast-type cells. Outcome vintages come from Philadelphia
Fed real-time histories and the Real-Time Data Set for Macroeconomists (RTDSM).

The national-account application uses RTDSM outcomes. First-three-release
values use matched day-level BEA dates where available and flagged approximations otherwise.
The later target is the first quarterly-vintage observation at least four
quarters after the first observed vintage. Nine components are forecast by an
autoregressive rule, a pooled-ridge rule, and their simple combination. The
pre-evaluation sample runs from \NAPreEvaluationStart{} to \NAPreEvaluationEnd{}
and the evaluation sample from \NATerminalStart{} to \NATerminalEnd, with
\NATerminalN{} observations, \NATerminalCells{} target--model cells, and only
\NATerminalOrigins{} forecast origins.

Feasibility is recorded for every eligible target-cell/origin observation;
coverage, width, and score use the common origins on which all nine procedures
are feasible. The interval score is divided by the cell-specific RMSFE
estimated before evaluation, then averaged within and equally across cells.
This normalized score compares coverage and width after scaling by prior
forecast difficulty; it is not a welfare measure. Online
Appendix~\ref{app:s2-algorithms} gives the data construction, matching and
release rules, windows, and quantile conventions.

Table~\ref{tab:track-b-primary-main} reports all pre-specified primary
procedures on common origins. The table shows how release information and
stability determine procedure choice. In SPF, every
procedure undercovers. Coupling-robust intervals raise coverage but widen
enough to worsen score.
Exact strip OT removes some of the conservatism of Bonferroni--Makarov, but it
cannot repair marginal histories that no longer represent the evaluation
period. The next subsection locates this instability around COVID. National
accounts provide the complementary case. Later-error histories are already
informative and average one-year revision risk is small, so direct and
revision-aware calibration provide the cleanest coverage--score balance. The
hybrid-copula interval's lower score accompanies sub-nominal coverage.
Later-error delay alone does not justify component histories. Direct or
revision-aware calibration is preferable when intended-vintage errors are
informative. Component-based procedures require stable histories and credible
assumptions. The problematic default is treating a residual history as relevant
without checking which outcome vintages and revision regimes it represents.\footnote{The SPF
conclusions are unchanged when mean and median forecasts are analyzed
separately.}

\begin{table}[!t]
\centering
\small
\setlength{\tabcolsep}{2pt}
\caption{Out-of-sample performance on common origins}
\label{tab:track-b-primary-main}
\begin{tabular}{@{}lrrrrrr@{}}
\toprule
Method & Coverage & L miss & U miss & Score & Diff. & Width \\
\midrule
\multicolumn{7}{l}{\textit{Panel A: SPF ($N=578$; 30 origins; 20 cells)}} \\
Direct late absolute & 0.800 & 0.065 & 0.135 & 16.713 & 0.000 & 3.392 \\
Direct late signed & 0.790 & 0.069 & 0.141 & 17.100 & 0.387 & 3.237 \\
Revision aware & 0.809 & 0.062 & 0.130 & 16.578 & -0.136 & 3.427 \\
Bonferroni--Makarov & 0.859 & 0.050 & 0.091 & 17.306 & 0.593 & 5.287 \\
Exact strip OT & 0.839 & 0.081 & 0.080 & 17.883 & 1.170 & 4.990 \\
Hybrid copula & 0.689 & 0.103 & 0.209 & 18.656 & 1.943 & 2.574 \\
Independence convolution & 0.816 & 0.059 & 0.125 & 17.329 & 0.616 & 3.630 \\
Revision model & 0.814 & 0.061 & 0.125 & 17.337 & 0.623 & 3.632 \\
Adaptive direct & 0.811 & 0.065 & 0.124 & 17.126 & 0.413 & 3.827 \\
\addlinespace
\multicolumn{7}{l}{\textit{Panel B: National accounts ($N=240$; 9 origins; 27 cells)}} \\
Direct late absolute & 0.979 & 0.008 & 0.012 & 2.187 & 0.000 & 1.879 \\
Direct late signed & 0.942 & 0.016 & 0.041 & 2.155 & -0.033 & 1.710 \\
Revision aware & 0.984 & 0.004 & 0.012 & 2.106 & -0.081 & 1.799 \\
Bonferroni--Makarov & 0.992 & 0.000 & 0.008 & 2.941 & 0.754 & 2.756 \\
Exact strip OT & 0.942 & 0.033 & 0.025 & 2.642 & 0.455 & 2.131 \\
Hybrid copula & 0.877 & 0.049 & 0.074 & 1.937 & -0.250 & 1.246 \\
Independence convolution & 0.971 & 0.000 & 0.029 & 2.120 & -0.067 & 1.820 \\
Revision model & 0.971 & 0.000 & 0.029 & 2.120 & -0.067 & 1.821 \\
Adaptive direct & 0.988 & 0.000 & 0.012 & 2.932 & 0.744 & 2.683 \\
\bottomrule
\end{tabular}
\begin{minipage}{0.98\linewidth}
\footnotesize Notes: $N$ is the number of forecast observations. An SPF cell is a target--horizon--forecast-type combination ($5\times2\times2=20$); a national-account cell is a component--model combination ($9\times3=27$). Coverage, tail misses, normalized interval score, and normalized width are first averaged within cells and then equally weighted across cells. L miss and U miss are lower- and upper-tail miss rates. Diff. is the normalized-score difference from direct late absolute calibration; negative values favor the listed method. OT denotes optimal transport. Every row in a panel uses the same forecast-origin observations. Scales and tuning choices are fixed using the pre-evaluation sample.
\end{minipage}
\end{table}

\begin{figure}[p]
\centering
\includegraphics[width=0.84\linewidth]{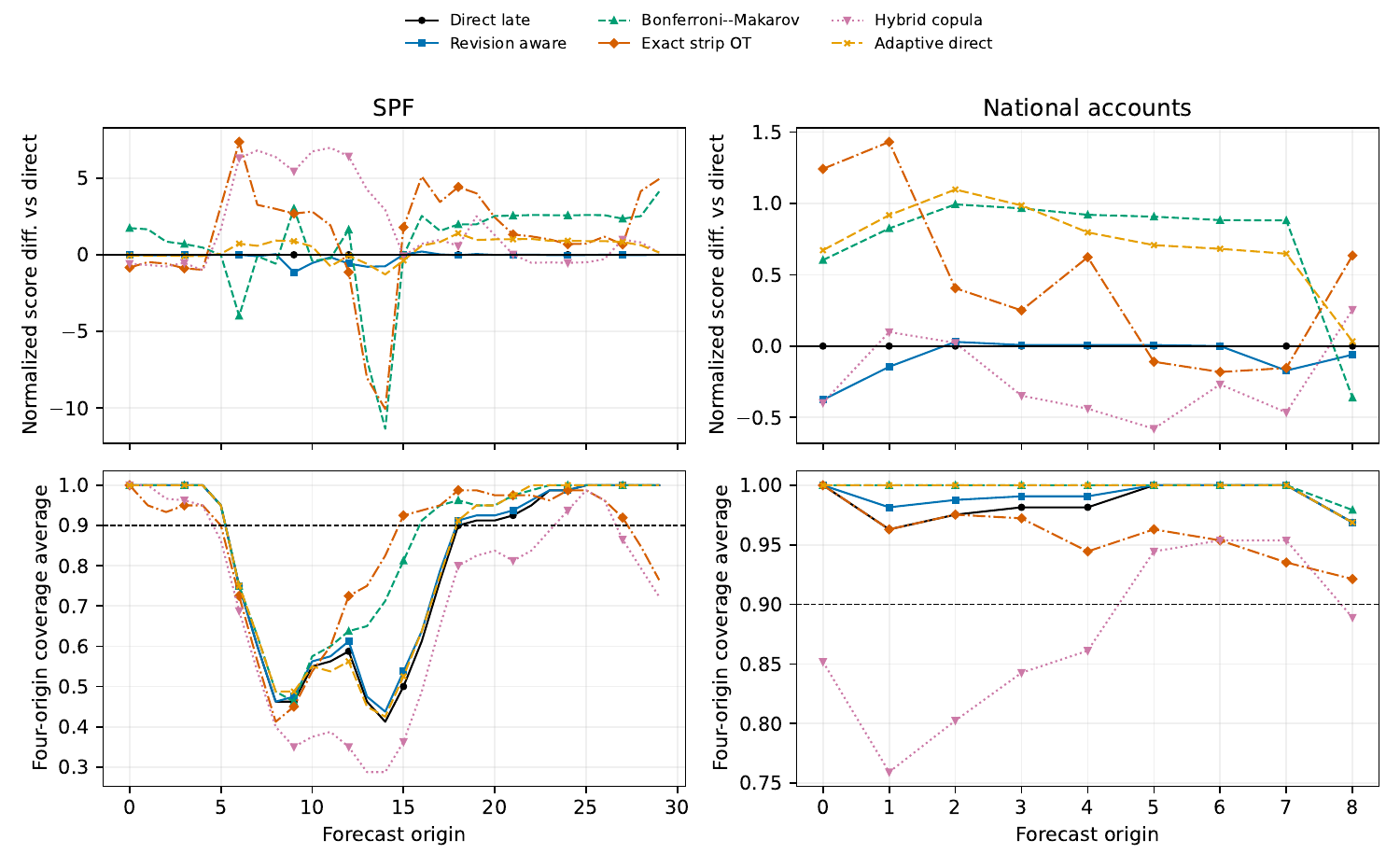}
    \caption{Origin-level score differences and coverage in the later evaluation periods}
\label{fig:holdout-origin-paths}
\begin{minipage}{0.90\linewidth}
\small Notes: Top panels show normalized-score differences from direct late
absolute calibration; negative values favor the listed method. Bottom panels
show four-origin moving coverage averages, with 0.90 marked by the dashed
line. The national-account path contains \NATerminalOrigins{} origins. Online
Appendix S2 reports the complete period decomposition and origin summaries.
\end{minipage}
\end{figure}

\begin{figure}[p]
\centering
\includegraphics[width=0.90\linewidth]{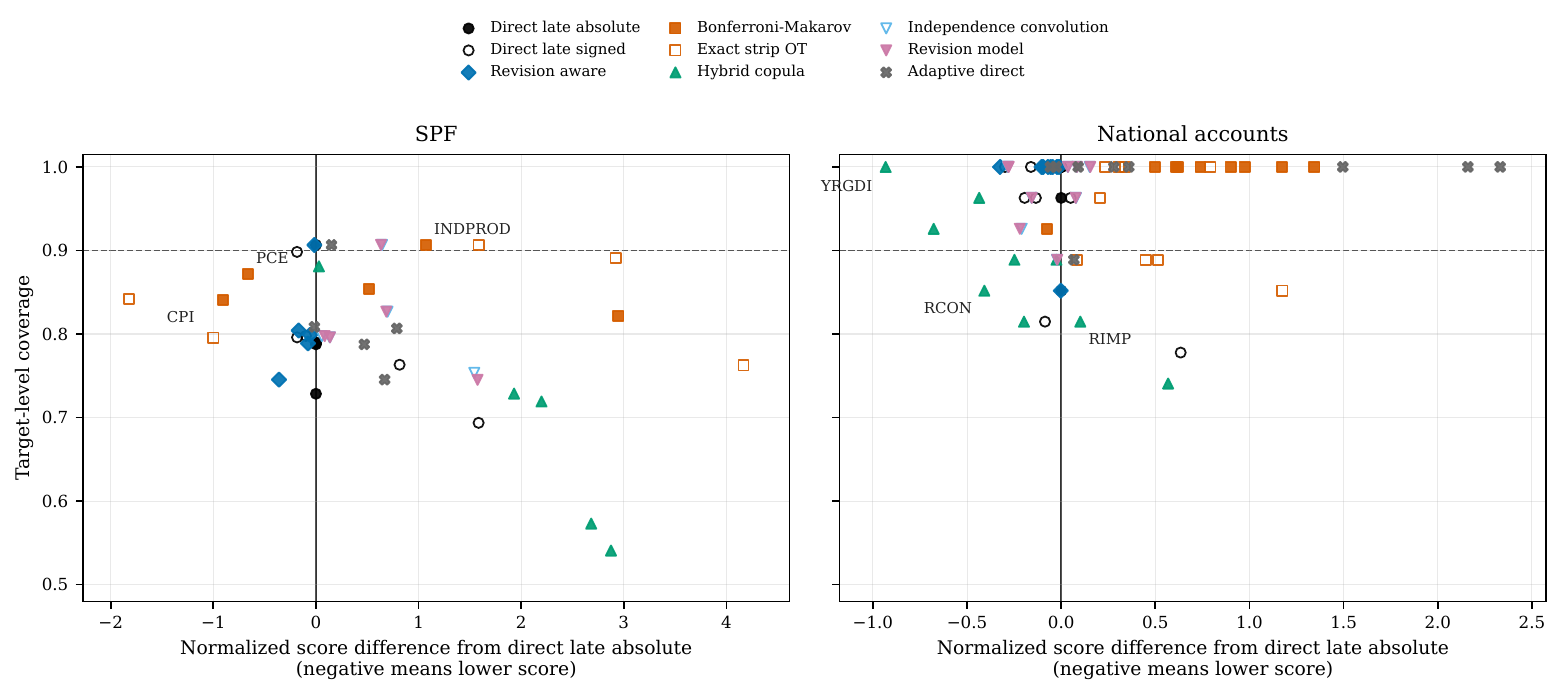}
\caption{Coverage and score by target}
\label{fig:target-tradeoffs}
\begin{minipage}{0.90\linewidth}
\small Notes: Points are equal-weight target-cell averages on common origins.
The horizontal axis is the method's normalized score minus that of direct late
absolute calibration, so negative values are lower. Lines mark 0.90 coverage
and zero score difference. All pre-specified targets and nine primary methods
are shown. RCON, RIMP, and YRGDI denote real personal consumption expenditures,
real imports, and real gross domestic income. Online Appendix
Table~\ref{tab:track-b-target-level} gives the numerical results.
\end{minipage}
\end{figure}

\subsection{Instability and heterogeneity}

COVID observations remain in the pre-specified primary SPF sample.
Figure~\ref{fig:holdout-origin-paths} shows that the deterioration is
concentrated in that episode. During COVID, coverage is
\SPFCOVIDDirectCoverage\ for direct late, \SPFCOVIDRevisionAwareCoverage\ for
revision-aware calibration, \SPFCOVIDBMCoverage\ for Bonferroni--Makarov, and
\SPFCOVIDStripCoverage\ for exact strip OT. No procedure maintains nominal
coverage during the local scale and tail break. After COVID, direct and revision-aware coverage recovers
to \SPFPostCOVIDDirectCoverage\ and \SPFPostCOVIDRevisionAwareCoverage. The
period split is diagnostic and excludes no observation from the primary
result.

The history diagnostics separate scarcity from instability. On the common
forecast origins, SPF late-error histories are nearly as long as its revision
and paired histories (Online Appendix~\ref{app:s3-terminal}), so the
out-of-sample deterioration is not caused by materially fewer late errors.
Instead, their scale and upper tails shift sharply around COVID. National-account
revision histories are broader because revisions do not require matching
historical forecasts, but the late-error histories are already informative and
average revision risk is small.

Online Appendix~\ref{app:s3-state-dependence} adds NBER-recession evidence
and raw target-level heterogeneity. The primary evaluation retains every crisis
observation.

Figure~\ref{fig:target-tradeoffs} displays coverage and score jointly. A point
left of zero has a lower score than direct absolute calibration, but a point
well below 0.90 may achieve it through an interval that is too narrow. A
coupling-robust point above 0.90 and right of zero instead shows the width paid
for protection against unknown dependence.

In SPF, some coupling-robust intervals for CPI and PCE lower score but remain
below nominal coverage. For INDPROD, direct and revision-aware calibration are
already near the nominal line and score competitively, while robust intervals
mainly add width. In national accounts, some hybrid-copula score gains also
coincide with undercoverage; for YRGDI, several assumption-dependent intervals
lower score without reducing coverage, although the short evaluation sample
makes that pattern descriptive. Because targets and procedures were fixed
before evaluation, this heterogeneity informs maintained assumptions rather
than ex-post selection. Ex ante choice can use knowledge of how each series is
revised, whether revisions are predictable, and whether dependence, scale, and
tails are stable.

\subsection{Full-history comparison and residual benchmarks}

The later-period exercise asks whether procedures fixed on earlier data remain
reliable at subsequent forecast origins. The full-history exercise in Online
Appendix~\ref{app:s3-benchmarks} instead summarizes real-time-feasible
performance over the broader historical record. Both use common origins, so
every method faces the same forecast-origin and target observations. The
appendix also compares the paper's procedures with Gaussian, Student-\(t\),
rolling empirical-quantile, and robust-scale residual bands around the same
issued point forecasts. These are interval-construction benchmarks, not
alternative forecasts or replications of the Clements--Galv\~ao
predictive-density systems. The two exercises can differ because the
full-history comparison averages across a wider range of conditions, whereas
the later period includes the COVID disruption in SPF. Across both, direct or
revision-aware calibration remains strong when later-outcome errors are
informative; component-based procedures additionally require representative
histories and credible dependence or revision-model restrictions.

\section{Conclusion}
\label{sec:conclusion}

Outcome vintage is part of the forecast-risk estimand. As official estimates
are revised, the error of an already-issued forecast can change even though the
forecast and target period do not. First releases, specified later vintages,
and latest-value benchmarks therefore answer different evaluation questions.
The error--revision interaction determines whether revisions amplify or offset
the preliminary error, and its importance differs across targets and release
horizons.

Interval construction should be guided by economic restrictions, institutional
knowledge of revisions, and the histories available at the forecast origin.
Representative intended-vintage errors support direct calibration. Component
histories support coupling-robust construction only when their marginals remain
representative. Narrower paired or model-based intervals also require credible
dependence or predictive restrictions. Exact strip OT protects against unknown
coupling for fixed empirical marginals, not changing marginals. Thus method
choice follows the intended use, release process, and defensible assumptions.
Forecast reports should name the evaluation and calibration-error vintages,
document available histories, and state the maintained restrictions.

\bibliographystyle{chicago}
\bibliography{references_final_submission}

\appendix
\section{Proofs of Main Results}
\label{app:proofs-main}

The propositions are proved below. Corollary proofs and interval implementation
details appear in Online Appendix~\ref{app:s1-corollary-proofs} and
\ref{app:s2-algorithms}.

\begin{proof}[Proof of Proposition~\ref{prop:outcome-vintage-risk}]
Write \(e=e_t(v)\) and \(\Delta=\Delta_t(v,w)\). Since
\(Y_t(w)-f_s(t)=e+\Delta\), the maintained square integrability permits
\[
R_w(f)=E[(e+\Delta)^2]
=R_v(f)+E[\Delta^2]+2E[e\Delta],
\]
which proves the first identity. Applying it to \(f_a\) and \(f_b\), then
subtracting, gives
\begin{align*}
D_w(a,b)-D_v(a,b)
&=2E[(Y_t(v)-f_{a,s}(t))\Delta]
  -2E[(Y_t(v)-f_{b,s}(t))\Delta]\\
&=-2E[\{f_{a,s}(t)-f_{b,s}(t)\}\Delta].
\end{align*}
The common term \(E[\Delta^2]\) cancels explicitly.
\end{proof}

\begin{proof}[Proof of Proposition~\ref{prop:released-error-track-b}]
On the stated event, the triangle inequality gives
\[
\|\widehat H_s-F_s\|_\infty
\le \varepsilon_s+B_s=:\eta_s.
\]
The selected history is released by origin \(s\), and its indices, weights,
and quantile level are chosen predictably. Hence \(q_s\) is
\(\mathcal I_s\)-measurable. Conditional on \(\mathcal I_s\), it is fixed and
the event \(Y_t(v)\in\Pi_s(v,t)\) is exactly \(S_t(v)\le q_s\), so coverage is
\(F_s(q_s;v)\).

By the generalized-inverse definition and right continuity of the weighted
empirical CDF,
\(\widehat H_s(q_s;v)\ge\tau_s\) and
\(\lim_{x\uparrow q_s}\widehat H_s(x;v)\le\tau_s\). Therefore
\[
F_s(q_s;v)\ge \widehat H_s(q_s;v)-\eta_s
             \ge \tau_s-\eta_s,
\]
whereas
\[
\lim_{x\uparrow q_s}F_s(x;v)
             \le \lim_{x\uparrow q_s}\widehat H_s(x;v)+\eta_s
               \le \tau_s+\eta_s.
\]
Adding the target-law atom
\(a_s(q_s)=F_s(q_s;v)-\lim_{x\uparrow q_s}F_s(x;v)\) to the second inequality gives
\[
 \tau_s-\eta_s\le F_s(q_s;v)
 \le\tau_s+\eta_s+a_s(q_s).
\]
Subtracting \(1-\alpha\) and using
\(\pm\{\tau_s-(1-\alpha)\}\le
|\tau_s-(1-\alpha)|\) proves \eqref{eq:released-cdf-bound}.
\end{proof}

\begin{proof}[Proof of Proposition~\ref{prop:population-fm-track-b}]
Fix \(z\) and a realization of \(\mathcal G_s\) outside the null set on which
the regular conditional laws or their stated versions are undefined. Suppress
this fixed conditioning information and consider any admissible coupling.
Since
\(\{e\le x,\Delta\le z-x\}\subseteq\{e+\Delta\le z\}\), the Fr\'echet lower
bound for an intersection gives
\[
P(e+\Delta\le z)
\ge [F_e(x)+F_\Delta(z-x)-1]_+.
\]
Taking the supremum yields the displayed lower bound. Conversely,
\(
\{e+\Delta\le z\}\subseteq
\{e\le x\}\cup\{\Delta\le z-x\}
\), hence
\[
P(e+\Delta\le z)
\le\min\{F_e(x)+F_\Delta(z-x),1\}.
\]
Taking the infimum yields the upper bound.

The classical fixed-marginal sum result equates these expressions with the
infimum and supremum over all couplings having the stated marginals
\citep{makarov1982,ruschendorf1982}; see also the copula construction in
\citet{frankNelsenSchweizer1987}. Under the right-continuous convention for
\(P(e+\Delta\le z)\), the upper value is attained. The lower value need not be
attained and can instead be only an infimum; discreteness of at least one
marginal is sufficient for lower attainment \citep{zhangRichardson2025}.

It remains to characterize the event probabilities between the two bounds.
If couplings \(\gamma_L\) and \(\gamma_U\) attain the lower and upper values,
then the mixture \(\lambda\gamma_L+(1-\lambda)\gamma_U\) is an admissible
coupling and assigns probability
\[
 \lambda\underline F_s(z)+(1-\lambda)\overline F_s(z)
\]
to \(\{e+\Delta\le z\}\). Varying \(\lambda\in[0,1]\) therefore fills the
closed interval between the two extrema. If the lower value is not attained,
fix any \(p\in(\underline F_s(z),\overline F_s(z))\). By the definition of an
infimum, there is an admissible coupling \(\gamma_q\) whose event probability
\(q\) satisfies \(\underline F_s(z)<q<p\). Mixing \(\gamma_q\) with an
upper-attaining coupling and choosing
\(
 \lambda=(\overline F_s(z)-p)/(\overline F_s(z)-q)
\)
gives event probability exactly \(p\). The upper endpoint itself is attained,
while the lower endpoint belongs to the attainable set if and only if a
lower-attaining coupling exists. Finally, the extremizing couplings can depend
on \(z\); nothing in this pointwise argument supplies one coupling that attains
an envelope at every threshold.
\end{proof}

\begin{proof}[Proof of Proposition~\ref{prop:empirical-strip-ot-track-b}]
After repeated support points have been merged, every coupling of the two
finite empirical marginals is represented by a nonnegative matrix \(\pi\) with
row sums \(a_i\) and column sums \(b_j\), and every such matrix represents a
coupling. The transportation polytope is nonempty because it contains the
product coupling \(\pi_{ij}=a_i b_j\). It is closed and bounded in a
finite-dimensional space and hence compact.

For fixed endpoints \(\ell\le r\), define
\(c_{ij}=\mathbf 1\{\ell\le e_i+d_j\le r\}\). The probability assigned to the
closed strip by coupling \(\pi\) is then the continuous linear functional
\[
 \sum_{i,j}c_{ij}\pi_{ij}.
\]
Compactness implies that this functional attains its minimum. Consequently,
the displayed primal program is the exact minimum strip probability over all
couplings of the fixed empirical marginals. Its value is at least
\(1-\alpha\) if and only if every such coupling assigns at least
\(1-\alpha\) probability to the strip. Adding the issued forecast to both
endpoints is a translation and leaves this probability unchanged.

The constraints on the row and column sums are equalities, so their dual
variables are unrestricted. Standard finite-dimensional linear-programming
duality gives
\[
\widehat C_s^{\min}(\ell,r)=
\max_{u,v}\left\{
\sum_i a_i u_i+\sum_j b_jv_j:
u_i+v_j\le c_{ij}\ \text{for all }i,j
\right\},
\]
with no duality gap. Finally, the transportation polytope is convex, and the
strip-mass functional is linear. Its image is therefore convex. Because the
polytope is compact, the image is also compact, and hence is a closed interval.
Thus every strip probability between its attained minimum and maximum is itself
attained by an admissible coupling.
\end{proof}

\begin{proof}[Proof of Proposition~\ref{prop:product-law-track-b}]
For the first statement, maintain the product law and condition throughout on
\(\mathcal G_s\). After conditioning on \(\Delta=d\), the product law makes the
conditional CDF of \(e\) equal to \(F_{e,s}\). Iterated integration therefore
gives
\[
P(e+\Delta\le z\mid\mathcal G_s)
=\int F_{e,s}(z-d\mid\mathcal G_s)
  \,dF_{\Delta,s}(d\mid\mathcal G_s).
\]
For the second statement, set
\(\xi=\Delta-\widehat r_s(t)\) and maintain the stated conditional product law
for \(e\) and \(\xi\). Because \(\widehat r_s(t)\) is
\(\mathcal G_s\)-measurable,
\[
P(e+\Delta\le z\mid\mathcal G_s)
=\int F_{e,s}(z-\widehat r_s(t)-x\mid\mathcal G_s)
  \,dF_{\xi,s}(x\mid\mathcal G_s),
\]
which is the residual-revision convolution.
\end{proof}

\clearpage
\setcounter{section}{0}
\setcounter{subsection}{0}
\setcounter{table}{0}
\setcounter{figure}{0}
\renewcommand{\thesection}{S\arabic{section}}
\renewcommand{\thesubsection}{\thesection.\arabic{subsection}}
\renewcommand{\thetable}{S\arabic{table}}
\renewcommand{\thefigure}{S\arabic{figure}}
\renewcommand{\theHsection}{onlineappendix.\arabic{section}}
\renewcommand{\theHsubsection}{onlineappendix.\arabic{section}.\arabic{subsection}}
\renewcommand{\theHtable}{onlineappendix.\arabic{table}}
\renewcommand{\theHfigure}{onlineappendix.\arabic{figure}}

\begin{center}
{\Large\bfseries Online Appendix: Revision Risk in Real-Time
Macroeconomic Forecasting}
\end{center}

\medskip
This supplement documents the operational algorithms and data construction,
then reports release-cycle details, later-period diagnostics, historical
benchmarks, and additional Monte Carlo results. It also provides the proofs of
the corollaries stated in the manuscript.

\section{Proofs of Corollaries}
\label{app:s1-corollary-proofs}

\begin{proof}[Proof of Main Corollary~\ref{cor:bonferroni-makarov}]
Let \(S=e+\Delta\) and condition throughout on \(\mathcal G_s\). Because the
reported interval is closed, its complement is the disjoint union
\[
\{S<\ell_s\}\cup\{S>r_s\}.
\]
For any admissible coupling \(\gamma\), the strict left-tail probability is
\[
P_\gamma(S<\ell_s\mid\mathcal G_s)
=\lim_{x\uparrow\ell_s}F_{S,\gamma}(x)
\le \lim_{x\uparrow\ell_s}\overline F_s(x).
\]
The left limit is required because \(S=\ell_s\) is covered. For the right
tail, right continuity of the CDF gives
\[
P_\gamma(S>r_s\mid\mathcal G_s)
=1-F_{S,\gamma}(r_s)
\le 1-\underline F_s(r_s).
\]
Adding these two bounds and using the assumed inequality yields
\[
P_\gamma(S<\ell_s\ \text{or}\ S>r_s\mid\mathcal G_s)
\le\alpha.
\]
Therefore
\(P_\gamma(\ell_s\le S\le r_s\mid\mathcal G_s)\ge1-\alpha\)
for every admissible coupling. Since
\(Y_t(w)=f_s(t)+S\), translating the strip by the issued forecast gives the
stated interval.
\end{proof}

\begin{proof}[Proof of Main Corollary~\ref{cor:empirical-strip-shortest-track-b}]
Let \(\mathcal S_s=\{e_i+d_j\}_{i,j}\). Consider any feasible closed strip
\([\ell,r]\). Because \(1-\alpha>0\), feasibility implies that
\(\mathcal S_s\cap[\ell,r]\) is nonempty. Define
\[
\ell'=\min\{z\in\mathcal S_s:\ell\le z\le r\},
\qquad
r'=\max\{z\in\mathcal S_s:\ell\le z\le r\}.
\]
For every pair \((i,j)\),
\[
\mathbf 1\{\ell\le e_i+d_j\le r\}
=\mathbf 1\{\ell'\le e_i+d_j\le r'\}.
\]
Hence the two strips have the same cost matrix in the transportation linear
program and therefore the same worst-case mass. Also
\([\ell',r']\subseteq[\ell,r]\), so \(r'-\ell'\le r-\ell\), with both new
endpoints in \(\mathcal S_s\).

It follows that every feasible closed strip can be replaced by a no-wider
feasible strip whose endpoints are pairwise sums. The candidate set
\(\{(\ell,r)\in\mathcal S_s^2:\ell\le r\}\) is finite, so the minimum over it
is attained. Since every closed strip outside this set has a candidate strip
with the same worst-case mass and no greater width, the attained finite-search
solution is globally minimum-width among all closed strips satisfying the fixed
empirical-marginal coupling requirement.
\end{proof}

\section{Data Construction and Implementation}
\label{app:s2-algorithms}

This section describes the release-date rules, matching conventions, interval
algorithms, fixed tuning choices, scoring, and inference.

\subsection{Released histories and matching rules}

For every forecast origin \(s\), a historical error enters a method only if
the outcome vintage used to compute it has release date no later than \(s\).
A revision enters only if both its early and later values are released by
\(s\). A paired observation enters only if its early error and revision are
jointly available. Revision-model residuals are constructed pseudo-real-time:
the revision forecast for historical period \(u\) is estimated only from
revisions available by \(u\)'s later-release date.

For the Survey of Professional Forecasters (SPF), horizon 0 denotes the
current-quarter forecast and horizon 1 the one-quarter-ahead forecast. The
180-day outcome is the last released value
available by the date 180 days after first release. Because the definition
uses the complete interval through that cutoff, its operational availability
date is the cutoff, not the date of an earlier selected release. For national
accounts, first, second, and third releases use matched day-level U.S. Bureau
of Economic Analysis (BEA) dates where available. The one-year value is the first quarterly-vintage observation
at least four quarters after the first observed vintage and therefore carries a
vintage-quarter timestamp rather than an exact publication day. Latest/final
database values are used only as ex-post release-cycle benchmarks and never in
an operational interval.

The SPF forecasts come from the Philadelphia Fed SPF panel. CPI, INDPROD, and
PCE outcomes are quarterly averages of the CPIAUCSL, INDPRO, and PCEPI real-time
series converted to annualized quarter-on-quarter growth; EMP is the change in
the quarterly average of PAYEMS. RGDP uses annualized quarter-on-quarter growth
from the Philadelphia Fed Real-Time Data Set for Macroeconomists (RTDSM).
Day-level release dates are exact for CPI, INDPROD, PCE, and EMP. RGDP uses official
RTDSM first-, second-, and third-release values with the approximate day-level
calendar in the cached RTDSM workbook. National-account outcomes also come from
RTDSM, with release dates matched to BEA publications where available. The autoregressive (AR), pooled-ridge, and
combination forecasts are constructed as
described in Main Section~\ref{sec:empirical-applications}. Sample periods,
target cells, outcome pairs, and observation counts are reported in the
empirical design and the tables below.

\subsection{Quantiles and primary interval algorithms}
\label{app:s2-quantiles}

All nine procedures target 90 percent coverage. The presentation below groups
them by the histories released at the forecast origin. Within each information
set, every entry separates the interval construction from the
representativeness, dependence, or model condition it maintains. Main
Section~\ref{sec:identification-inference} gives the corresponding
formulas and the cumulative distribution function (CDF) convention for signed
endpoints.

\begingroup
\begin{enumerate}[
  label=\textbf{\arabic*.},
  leftmargin=1.7em,
  labelsep=0.45em,
  itemsep=9pt,
  topsep=6pt,
  parsep=0pt
]

\item \textbf{Released errors for the specified later outcome.}
\begin{enumerate}[
  label=\textit{(\roman*)},
  leftmargin=2.1em,
  labelsep=0.45em,
  itemsep=4pt,
  topsep=4pt,
  parsep=0pt
]
\item \emph{Direct late, absolute.}
For past forecasts whose outcome \(w\) has been released, take a corrected
empirical quantile of the absolute later errors and place the resulting
symmetric band around the issued forecast. This requires the released errors
to represent the current later-error scale and tails.

\item \emph{Direct late, signed.}
Using the same later-error history, add its empirical lower- and upper-tail
quantiles to the issued forecast. The interval may be asymmetric, but the
released signed-error distribution must remain representative.

\item \emph{Adaptive direct.}
After each specified later outcome is released, update the nominal miss
probability and recompute the quantile of released later absolute errors.
Delayed feedback leaves the procedure exposed to an abrupt break not yet
represented in the released history.
\end{enumerate}

\item \textbf{Errors observed at different release stages.}
\begin{enumerate}[
  label=\textit{(\roman*)},
  leftmargin=2.1em,
  labelsep=0.45em,
  itemsep=4pt,
  topsep=4pt,
  parsep=0pt
]
\item \emph{Revision aware.}
For each past forecast, use the later error when \(w\) has been released and
otherwise use the released early error \(v\). Apply the symmetric
absolute-error rule to the resulting history. This requires errors measured at
different release stages to approximate the current later-error scale.
\end{enumerate}

\item \textbf{Separate histories of early errors and revisions.}
\begin{enumerate}[
  label=\textit{(\roman*)},
  leftmargin=2.1em,
  labelsep=0.45em,
  itemsep=4pt,
  topsep=4pt,
  parsep=0pt
]
\item \emph{Bonferroni--Makarov.}
Use separate, not necessarily paired, histories of early errors and revisions.
For 90 percent target coverage, form 95 percent component intervals and add
their lower and upper endpoints. The marginal histories must remain
representative, but their pairing is unrestricted.

\item \emph{Exact strip optimal transport.}
Fix the two empirical marginal histories and search over all pairings of their
probability masses. Among closed intervals with at least 90 percent
worst-case empirical mass, select the shortest. The result is exact
conditional on those fixed empirical marginals; it does not cover sampling
error or marginal change.

\item \emph{Independence convolution.}
Form every Cartesian sum of an early-error observation and a revision
observation, then take signed tail quantiles of the sums. This imposes the
product law on the current early error and revision.
\end{enumerate}

\item \textbf{Paired dependence information or a revision forecast.}
\begin{enumerate}[
  label=\textit{(\roman*)},
  leftmargin=2.1em,
  labelsep=0.45em,
  itemsep=4pt,
  topsep=4pt,
  parsep=0pt
]
\item \emph{Hybrid empirical copula.}
Combine longer separate marginal histories with a shorter paired history.
Preserve the paired rank combinations, map those ranks into the longer
marginal histories, and take quantiles of the resulting sums. The historical
rank relationship must remain informative for the current forecast.

\item \emph{Revision model.}
Combine early errors with pseudo-real-time residuals from a forecast of the
later revision, then add that revision forecast to the quantiles of the
combined errors and residuals. The revision forecast and the product-law
combination of early errors with revision residuals must remain stable.
\end{enumerate}

\end{enumerate}
\endgroup

The following calibration-window lengths and minimum-history requirements are
fixed before the later-period evaluation. Direct late
absolute, direct late signed, revision-aware, and adaptive direct calibration
use expanding histories and require 20 released errors. Bonferroni--Makarov,
independence convolution, and revision-model transport use the 40 most recent
observations, requiring 20 early errors and 12 revisions or revision
residuals. Exact strip optimal transport uses the 12 most recent observations
from each marginal. The hybrid procedure uses up to 40 observations for each
marginal and 24 released pairs for dependence, with at least 12 observations
in every required history.

For empirical quantiles, the direct absolute rule uses the corrected
absolute-error rank, while signed intervals use the strict-left/weak-right
tail convention defined in the main text. Exact strip ties are resolved first
by width, then by the lower endpoint, and then by the upper endpoint. The
hybrid uses paired ranks divided by \(m+1\); tied values receive the average
of the ranks they occupy. The adaptive direct rule
sets step size \(\gamma=0.01\), clips its error level to \([0.01,0.50]\), and
updates once for each specified later outcome when that feedback is released.

For the revision-model procedure, each historical residual is computed using
a revision forecast trained only on revisions released by that historical
period's later-release date. This pseudo-real-time construction prevents the
current origin from supplying information that would not have been available
when the residual was formed. The adaptive direct update likewise occurs only
after the relevant later outcome has been released. These timing restrictions
are in addition to the minimum-history requirements in the table.

The secondary benchmarks use a fixed 40-observation signed window, a rolling
Gaussian location and scale, expanding Gaussian and Student-$t$ residual
bands, a median and median-absolute-deviation (MAD) robust-scale band, and an
absolute bridge. The oracle row
uses future simulated late errors and is explicitly non-operational.

\subsection{Scoring, inference, and feasibility}

For observation $i$, the interval score is
\[
(U_i-L_i)+\frac{2}{0.1}(L_i-Y_i)_+
             +\frac{2}{0.1}(Y_i-U_i)_+.
\]
The normalized score and width divide by the target-cell root mean squared
forecast error (RMSFE) estimated only from the pre-evaluation sample. Results
are averaged first within a
target--horizon--forecast-type or target--model cell and then equally across
cells. Method comparisons use identical feasible origins; operational
feasibility is reported separately.

SPF paired comparisons aggregate at forecast-origin level. Moving-block
bootstrap intervals use circular origin blocks of lengths 2, 4, and 8, with
length 4 primary, alongside an origin-level
heteroskedasticity-and-autocorrelation-consistent (HAC) calculation. National-account
terminal results span nine origins; bootstrap and leave-one-origin-out
calculations are descriptive. COVID observations remain in the primary SPF
evaluation.

\clearpage
\section{Additional Evidence}
\label{app:s3-evidence}

\subsection{Release-cycle results}
\label{app:s3-release-cycle}

Main Figure~\ref{fig:release-cycle-main} presents the target-group SPF
release-cycle evidence. Table~\ref{tab:revision-risk-term-structure-inference}
reports the corresponding numerical estimates and forecast-origin bootstrap
intervals, and Table~\ref{tab:revision-risk-term-structure} gives the broader
maturity grid. Table~\ref{tab:revision-risk-term-structure-robustness} replaces
the median forecasts with mean forecasts and reports the remaining horizon and
sample-period checks. The national-account rows show that the small pooled
revision-risk share masks heterogeneous release-horizon patterns across
components.

\begin{table}[!htbp]
\centering
\caption{Revision risk across the release cycle with block-bootstrap uncertainty}
\label{tab:revision-risk-term-structure-inference}
\footnotesize
\setlength{\tabcolsep}{2pt}
\begin{tabular}{@{}llrrrr@{}}
\toprule
Group & Release horizon & Rev. share & Interaction share & Amp. ratio & Late-error $n$ \\
\midrule
SPF real activity & 180 days & 8.2 (4.6, 31.0) & 3.3 (-11.0, 9.3) & 1.14 (1.08, 1.37) & 59.2 \\
SPF real activity & latest/final & 28.6 (17.9, 75.6) & -2.3 (-41.0, 8.2) & 1.38 (1.24, 2.10) & 0.0 \\
SPF inflation & 180 days & 3.6 (2.2, 6.3) & -11.2 (-26.1, -2.2) & 0.93 (0.83, 1.01) & 51.8 \\
SPF inflation & latest/final & 16.7 (11.0, 28.2) & -18.5 (-42.8, -6.4) & 0.99 (0.86, 1.10) & 0.0 \\
NA all components & one year & 0.6 (0.2, 15.4) & -1.2 (-21.0, -0.4) & 0.99 (0.91, 1.03) & 26.2 \\
NA all components & latest/final & 1.3 (0.4, 29.4) & -3.0 (-38.2, -1.7) & 0.98 (0.87, 0.99) & 0.0 \\
NA real\_output & one year & 0.7 (0.2, 11.7) & -2.0 (-10.0, 9.6) & 0.99 (0.94, 1.24) & 26.2 \\
NA real\_output & latest/final & 2.6 (1.1, 29.4) & -7.4 (-34.1, 1.5) & 0.95 (0.82, 1.26) & 0.0 \\
NA prices & one year & 0.5 (0.1, 5.3) & -0.8 (-10.4, 0.4) & 1.00 (0.94, 1.03) & 26.2 \\
NA prices & latest/final & 1.1 (0.3, 9.0) & -0.9 (-12.1, 3.4) & 1.00 (0.96, 1.08) & 0.0 \\
\bottomrule
\end{tabular}
\begin{minipage}{0.94\linewidth}
\footnotesize Notes: Parentheses report 95\% moving-block bootstrap intervals. SPF rows use horizon-0 median forecasts; mean forecasts provide the forecast-type robustness check. Revision and error--revision interaction shares are percentages of later-outcome MSE. The amplification ratio is later-outcome MSE divided by first-release MSE. Latest/final rows are ex-post benchmarks; late-error $n=0.0$ means that real-time direct calibration is not applicable because no dated final-release vintage exists.
\end{minipage}
\end{table}

\begin{table}[!htbp]
\centering
\caption{Revision-risk shares across release horizons}
\label{tab:revision-risk-term-structure}
\small
\begin{tabular}{lrrrrrrr}
\toprule
Target or group & 30/2nd & 60/3rd & 90d & 180d & 1y & Latest &
\shortstack{Latest-value\\interaction share} \\
\midrule
RGDP & 7.4 & 10.4 & 10.4 & 10.4 & 10.4 & 42.1 & -9.4 \\
INDPROD & 3.3 & 6.8 & 9.1 & 10.7 & 13.3 & 35.3 & -2.2 \\
EMP & 1.5 & 1.7 & 2.0 & 3.6 & 6.5 & 8.4 & 4.7 \\
CPI & 1.0 & 2.1 & 2.1 & 3.5 & 8.4 & 16.2 & -27.5 \\
PCE & 1.3 & 2.5 & 2.7 & 3.6 & 7.3 & 17.2 & -9.5 \\
NA: all components & 0.1 & 0.3 & -- & -- & 0.6 & 1.3 & -3.0 \\
NA: consumption & 0.2 & 0.7 & -- & -- & 1.0 & 3.4 & -10.8 \\
NA: external & 0.7 & 1.3 & -- & -- & 3.1 & 4.8 & -16.6 \\
NA: government & 0.5 & 0.7 & -- & -- & 2.5 & 8.5 & -1.9 \\
NA: investment & 0.1 & 0.2 & -- & -- & 0.4 & 0.9 & -1.8 \\
NA: prices & 0.1 & 0.2 & -- & -- & 0.5 & 1.1 & -0.9 \\
NA: real income & 0.0 & 0.4 & -- & -- & 0.5 & 1.2 & -2.9 \\
NA: real output & 0.3 & 0.6 & -- & -- & 0.7 & 2.6 & -7.4 \\
\bottomrule
\end{tabular}
\begin{minipage}{0.94\linewidth}
\footnotesize Notes: Entries are revision-risk shares of later-outcome MSE. SPF rows use horizon-0 median forecasts, and fixed-day columns are as-of values from observed released histories. For national accounts, 30/2nd and 60/3rd are second and third NIPA releases, and 1y is the first QvQd vintage at least four quarters after the first observed vintage; its timing is recorded at the vintage-quarter level. The last column is
\(100\times2E[e_t(v)\Delta_t(v,L)]/E[e_t(L)^2]\), where \(v\) is the first
release and \(L\) is the ex-post latest value. Positive values indicate that
revisions reinforce the first-release error; negative values indicate offset.
\end{minipage}
\end{table}

\begin{table}[!htbp]
\centering
\caption{SPF term-structure robustness checks}
\label{tab:revision-risk-term-structure-robustness}
\small
\begin{tabular}{lrrr}
\toprule
Check & Observations & \shortstack{Real activity\\180d} &
\shortstack{Inflation\\180d} \\
\midrule
Baseline, horizon 0, median & 587 & 8.2 & 3.6 \\
Mean forecasts & 587 & 8.5 & 3.6 \\
Horizon 1, median & 582 & 2.2 & 1.3 \\
Excluding 2020--2021 & 547 & 17.0 & 4.5 \\
Pre-COVID through 2019 & 470 & 16.8 & 4.8 \\
\bottomrule
\end{tabular}
\begin{minipage}{0.94\linewidth}
\footnotesize Notes: Entries compare revision-risk shares at the 180-day
maturity. Real activity is the equal-weight average for RGDP, INDPROD, and EMP;
inflation is the equal-weight average for CPI and PCE. The baseline uses SPF
median forecasts at horizon 0; the mean-forecast row is the forecast-type
robustness check. Observations count target-origin forecasts with a matched
180-day outcome.
\end{minipage}
\end{table}

\subsection{State dependence and raw heterogeneity}
\label{app:s3-state-dependence}

The main release-cycle figure reports uncertainty in estimated target-group-average
curves. Figure~\ref{fig:s2-state-dependence} instead shows raw target-level and
episode-level heterogeneity at the 180-day maturity. The NBER-dated quarterly
windows are 1990Q3--1991Q1, 2001Q1--Q4, 2007Q4--2009Q2, and 2020Q1--Q2. The Great Depression is relevant to the
historical reach of national-account remeasurement but cannot enter this
forecast-error comparison because matching real-time forecasts are unavailable.

\clearpage
\begin{figure}[H]
\centering
\includegraphics[width=0.80\linewidth]{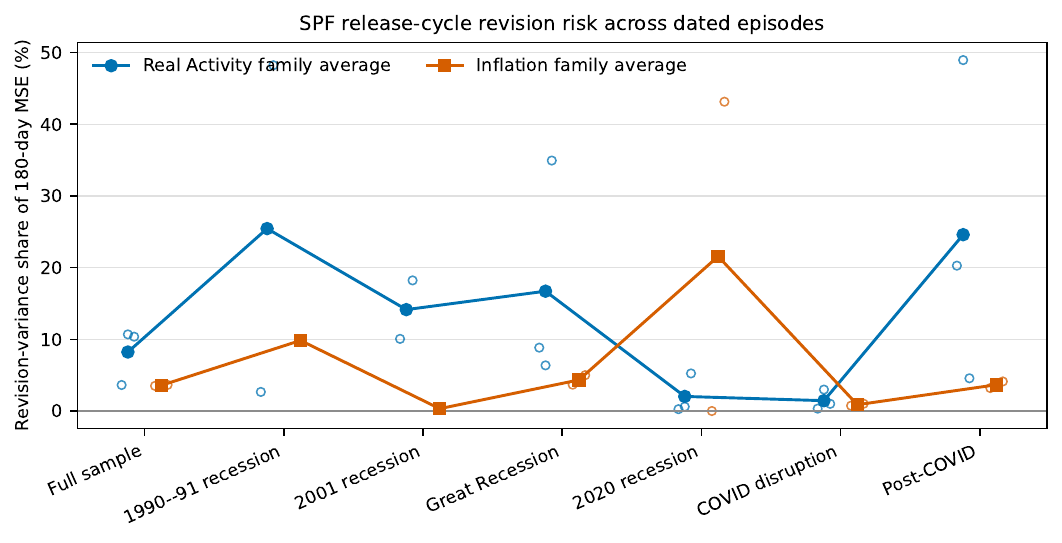}
\caption{SPF revision risk across dated macroeconomic episodes}
\label{fig:s2-state-dependence}
\begin{minipage}{0.94\linewidth}
\footnotesize\emph{Notes:} Filled markers are equal-weight target-group averages;
open markers are the target estimates entering each average. The outcome pair
is first release to the value available at roughly 180 days, using horizon-0
median forecasts. The recession windows are listed in the text. COVID
disruption covers 2020Q1--2021Q2 and post-COVID begins in 2022Q1. Unequal target
availability and small episode samples make the display descriptive.
Inflation/real-activity target-origin counts for the six dated episodes shown
are 3/6, 4/8, 14/21, 4/6, 12/18, and 31/46.
\end{minipage}
\end{figure}

\subsection{Terminal evaluation detail}
\label{app:s3-terminal}

Table~\ref{tab:track-b-primary-tails} reports tail misses and the histories
available on the common forecast origins. Method-specific windows and sample
caps are defined in Section~\ref{app:s2-quantiles}. Table~\ref{tab:track-b-stability}
gives the pre-COVID, COVID, and post-COVID decomposition, and
Table~\ref{tab:track-b-target-level} reports every target and primary method.

\begin{table}[H]
\centering
\small
\setlength{\tabcolsep}{2pt}
\caption{Tail behavior on common origins}
\label{tab:track-b-primary-tails}
\begin{tabular}{lrrr}
\toprule
Method & Lower miss & Upper miss & Covg. diff. \\
\midrule
\multicolumn{4}{l}{\textit{Panel A: SPF}} \\
Adaptive direct ACI & 0.065 & 0.124 & 0.011 \\
Bonferroni-Makarov & 0.050 & 0.091 & 0.059 \\
Direct late absolute & 0.065 & 0.135 & 0.000 \\
Direct late signed & 0.069 & 0.141 & -0.011 \\
Exact strip OT & 0.081 & 0.080 & 0.039 \\
Hybrid copula & 0.103 & 0.209 & -0.112 \\
Independence convolution & 0.059 & 0.125 & 0.016 \\
Revision aware & 0.062 & 0.130 & 0.009 \\
Revision model & 0.061 & 0.125 & 0.014 \\
\addlinespace
\multicolumn{4}{l}{\textit{Panel B: National accounts}} \\
Adaptive direct ACI & 0.000 & 0.012 & 0.008 \\
Bonferroni-Makarov & 0.000 & 0.008 & 0.012 \\
Direct late absolute & 0.008 & 0.012 & 0.000 \\
Direct late signed & 0.016 & 0.041 & -0.037 \\
Exact strip OT & 0.033 & 0.025 & -0.037 \\
Hybrid copula & 0.049 & 0.074 & -0.103 \\
Independence convolution & 0.000 & 0.029 & -0.008 \\
Revision aware & 0.004 & 0.012 & 0.004 \\
Revision model & 0.000 & 0.029 & -0.008 \\
\bottomrule
\end{tabular}
\begin{minipage}{0.94\linewidth}
\footnotesize\emph{Notes:} Available histories are common to all methods because
the comparisons use the same forecast origins. Equal-cell averages for SPF are
late \(n=99.4\), revision \(n=99.9\), and paired \(n=99.4\); the corresponding
national-account averages are \(51.5\), \(213.4\), and \(51.5\).
Method-specific windows and sample caps are defined in
Section~\ref{app:s2-quantiles}.
\end{minipage}
\end{table}

\begin{table}[!htbp]
\centering
\small
\setlength{\tabcolsep}{1.4pt}
\caption{Period stability in the terminal evaluation}
\label{tab:track-b-stability}
\begin{tabular}{lrrrrr}
\toprule
Application & Period & Method & Coverage & Norm. score & Norm. width \\
\midrule
National accounts & post-COVID & Adaptive direct ACI & 0.988 & 2.932 & 2.683 \\
National accounts & post-COVID & Bonferroni-Makarov & 0.992 & 2.941 & 2.756 \\
National accounts & post-COVID & Direct late absolute & 0.979 & 2.187 & 1.879 \\
National accounts & post-COVID & Direct late signed & 0.942 & 2.155 & 1.710 \\
National accounts & post-COVID & Exact strip OT & 0.942 & 2.642 & 2.131 \\
National accounts & post-COVID & Hybrid copula & 0.877 & 1.937 & 1.246 \\
National accounts & post-COVID & Independence convolution & 0.971 & 2.120 & 1.820 \\
National accounts & post-COVID & Revision aware & 0.984 & 2.106 & 1.799 \\
National accounts & post-COVID & Revision model & 0.971 & 2.120 & 1.821 \\
SPF & COVID & Adaptive direct ACI & 0.507 & 29.223 & 3.114 \\
SPF & COVID & Bonferroni-Makarov & 0.607 & 28.502 & 4.717 \\
SPF & COVID & Direct late absolute & 0.500 & 29.005 & 3.122 \\
SPF & COVID & Direct late signed & 0.529 & 29.598 & 3.017 \\
SPF & COVID & Exact strip OT & 0.657 & 29.650 & 5.770 \\
SPF & COVID & Hybrid copula & 0.336 & 35.156 & 2.024 \\
SPF & COVID & Independence convolution & 0.514 & 30.967 & 2.961 \\
SPF & COVID & Revision aware & 0.514 & 28.540 & 3.228 \\
SPF & COVID & Revision model & 0.514 & 30.960 & 2.964 \\
SPF & post-COVID & Adaptive direct ACI & 0.932 & 5.691 & 4.580 \\
SPF & post-COVID & Bonferroni-Makarov & 0.979 & 6.293 & 6.199 \\
SPF & post-COVID & Direct late absolute & 0.915 & 5.001 & 3.689 \\
SPF & post-COVID & Direct late signed & 0.880 & 5.512 & 3.499 \\
SPF & post-COVID & Exact strip OT & 0.938 & 6.491 & 5.975 \\
SPF & post-COVID & Hybrid copula & 0.799 & 5.600 & 3.063 \\
SPF & post-COVID & Independence convolution & 0.946 & 5.040 & 4.382 \\
SPF & post-COVID & Revision aware & 0.925 & 4.965 & 3.717 \\
SPF & post-COVID & Revision model & 0.946 & 5.048 & 4.386 \\
SPF & pre-COVID & Adaptive direct ACI & 0.857 & 29.314 & 2.952 \\
SPF & pre-COVID & Bonferroni-Makarov & 0.857 & 29.454 & 3.890 \\
SPF & pre-COVID & Direct late absolute & 0.857 & 29.250 & 3.023 \\
SPF & pre-COVID & Direct late signed & 0.857 & 29.169 & 2.895 \\
SPF & pre-COVID & Exact strip OT & 0.814 & 30.223 & 2.121 \\
SPF & pre-COVID & Hybrid copula & 0.807 & 29.882 & 2.089 \\
SPF & pre-COVID & Independence convolution & 0.843 & 29.760 & 2.692 \\
SPF & pre-COVID & Revision aware & 0.857 & 29.233 & 3.005 \\
SPF & pre-COVID & Revision model & 0.836 & 29.784 & 2.688 \\
\bottomrule
\end{tabular}
\end{table}

\FloatBarrier
\small
\setlength{\tabcolsep}{4pt}
\begin{longtable}{llrrrr}
\caption{Target-level terminal results on common origins}\label{tab:track-b-target-level}\\
\toprule
Target & Method & Coverage & Norm. score & Diff. & Norm. width \\
\midrule
\endfirsthead
\multicolumn{6}{c}{\tablename\ \thetable\ (continued)}\\
\toprule
Target & Method & Coverage & Norm. score & Diff. & Norm. width \\
\midrule
\endhead
\midrule\multicolumn{6}{r}{Continued on next page}\\\endfoot
\bottomrule\endlastfoot
\multicolumn{6}{l}{\textit{Panel A: SPF}} \\
CPI & Adaptive direct ACI & 0.807 & 9.159 & 0.788 & 4.609 \\
CPI & Bonferroni-Makarov & 0.841 & 7.467 & -0.904 & 4.757 \\
CPI & Direct late absolute & 0.789 & 8.371 & 0.000 & 2.999 \\
CPI & Direct late signed & 0.798 & 8.273 & -0.098 & 2.995 \\
CPI & Exact strip OT & 0.842 & 6.549 & -1.822 & 3.457 \\
CPI & Hybrid copula & 0.719 & 10.568 & 2.198 & 2.919 \\
CPI & Independence convolution & 0.797 & 8.427 & 0.057 & 3.758 \\
CPI & Revision aware & 0.789 & 8.294 & -0.077 & 3.017 \\
CPI & Revision model & 0.797 & 8.456 & 0.085 & 3.756 \\
EMP & Adaptive direct ACI & 0.745 & 40.036 & 0.669 & 3.626 \\
EMP & Bonferroni-Makarov & 0.822 & 42.314 & 2.947 & 7.607 \\
EMP & Direct late absolute & 0.728 & 39.367 & 0.000 & 3.720 \\
EMP & Direct late signed & 0.694 & 40.952 & 1.585 & 3.599 \\
EMP & Exact strip OT & 0.762 & 43.533 & 4.166 & 6.997 \\
EMP & Hybrid copula & 0.541 & 42.242 & 2.875 & 2.387 \\
EMP & Independence convolution & 0.753 & 40.910 & 1.544 & 4.201 \\
EMP & Revision aware & 0.745 & 39.006 & -0.360 & 3.768 \\
EMP & Revision model & 0.745 & 40.942 & 1.575 & 4.210 \\
INDPROD & Adaptive direct ACI & 0.907 & 11.282 & 0.152 & 3.453 \\
INDPROD & Bonferroni-Makarov & 0.907 & 12.201 & 1.071 & 4.745 \\
INDPROD & Direct late absolute & 0.907 & 11.130 & 0.000 & 3.246 \\
INDPROD & Direct late signed & 0.898 & 10.946 & -0.184 & 2.946 \\
INDPROD & Exact strip OT & 0.907 & 12.715 & 1.584 & 5.219 \\
INDPROD & Hybrid copula & 0.881 & 11.159 & 0.029 & 2.265 \\
INDPROD & Independence convolution & 0.907 & 11.774 & 0.644 & 3.312 \\
INDPROD & Revision aware & 0.907 & 11.115 & -0.015 & 3.246 \\
INDPROD & Revision model & 0.907 & 11.765 & 0.634 & 3.314 \\
PCE & Adaptive direct ACI & 0.788 & 7.448 & 0.471 & 3.668 \\
PCE & Bonferroni-Makarov & 0.872 & 6.314 & -0.664 & 4.170 \\
PCE & Direct late absolute & 0.788 & 6.978 & 0.000 & 3.445 \\
PCE & Direct late signed & 0.796 & 6.794 & -0.184 & 3.263 \\
PCE & Exact strip OT & 0.796 & 5.979 & -0.999 & 3.066 \\
PCE & Hybrid copula & 0.729 & 8.909 & 1.931 & 2.696 \\
PCE & Independence convolution & 0.796 & 7.113 & 0.135 & 3.280 \\
PCE & Revision aware & 0.804 & 6.809 & -0.169 & 3.500 \\
PCE & Revision model & 0.796 & 7.114 & 0.137 & 3.276 \\
RGDP & Adaptive direct ACI & 0.809 & 17.707 & -0.014 & 3.777 \\
RGDP & Bonferroni-Makarov & 0.854 & 18.236 & 0.515 & 5.155 \\
RGDP & Direct late absolute & 0.790 & 17.722 & 0.000 & 3.549 \\
RGDP & Direct late signed & 0.763 & 18.537 & 0.815 & 3.382 \\
RGDP & Exact strip OT & 0.891 & 20.641 & 2.919 & 6.212 \\
RGDP & Hybrid copula & 0.573 & 20.404 & 2.682 & 2.604 \\
RGDP & Independence convolution & 0.827 & 18.420 & 0.698 & 3.597 \\
RGDP & Revision aware & 0.799 & 17.664 & -0.058 & 3.603 \\
RGDP & Revision model & 0.827 & 18.407 & 0.686 & 3.600 \\
\addlinespace
\multicolumn{6}{l}{\textit{Panel B: National accounts}} \\
P & Adaptive direct ACI & 1.000 & 2.400 & 0.359 & 2.400 \\
P & Bonferroni-Makarov & 1.000 & 2.781 & 0.740 & 2.781 \\
P & Direct late absolute & 1.000 & 2.041 & 0.000 & 2.041 \\
P & Direct late signed & 0.963 & 2.091 & 0.050 & 1.913 \\
P & Exact strip OT & 0.889 & 2.555 & 0.514 & 2.282 \\
P & Hybrid copula & 0.889 & 2.016 & -0.025 & 1.489 \\
P & Independence convolution & 1.000 & 2.124 & 0.083 & 2.124 \\
P & Revision aware & 1.000 & 2.019 & -0.022 & 2.019 \\
P & Revision model & 1.000 & 2.125 & 0.084 & 2.125 \\
RCON & Adaptive direct ACI & 1.000 & 3.884 & 2.331 & 3.884 \\
RCON & Bonferroni-Makarov & 1.000 & 2.898 & 1.345 & 2.898 \\
RCON & Direct late absolute & 1.000 & 1.553 & 0.000 & 1.553 \\
RCON & Direct late signed & 0.963 & 1.359 & -0.194 & 1.326 \\
RCON & Exact strip OT & 0.963 & 1.759 & 0.206 & 1.677 \\
RCON & Hybrid copula & 0.852 & 1.145 & -0.408 & 0.932 \\
RCON & Independence convolution & 0.963 & 1.631 & 0.079 & 1.598 \\
RCON & Revision aware & 1.000 & 1.449 & -0.104 & 1.449 \\
RCON & Revision model & 0.963 & 1.630 & 0.077 & 1.597 \\
REX & Adaptive direct ACI & 1.000 & 4.254 & 2.160 & 4.254 \\
REX & Bonferroni-Makarov & 1.000 & 2.994 & 0.900 & 2.994 \\
REX & Direct late absolute & 0.963 & 2.094 & 0.000 & 2.035 \\
REX & Direct late signed & 1.000 & 2.028 & -0.066 & 2.028 \\
REX & Exact strip OT & 1.000 & 2.886 & 0.793 & 2.886 \\
REX & Hybrid copula & 0.889 & 1.845 & -0.248 & 1.576 \\
REX & Independence convolution & 1.000 & 2.245 & 0.151 & 2.245 \\
REX & Revision aware & 1.000 & 2.038 & -0.056 & 2.038 \\
REX & Revision model & 1.000 & 2.248 & 0.155 & 2.248 \\
RG & Adaptive direct ACI & 1.000 & 2.296 & -0.058 & 2.296 \\
RG & Bonferroni-Makarov & 1.000 & 2.964 & 0.611 & 2.964 \\
RG & Direct late absolute & 1.000 & 2.354 & 0.000 & 2.354 \\
RG & Direct late signed & 0.778 & 2.988 & 0.635 & 1.990 \\
RG & Exact strip OT & 1.000 & 2.693 & 0.339 & 2.693 \\
RG & Hybrid copula & 0.741 & 2.922 & 0.568 & 1.366 \\
RG & Independence convolution & 0.926 & 2.144 & -0.210 & 2.036 \\
RG & Revision aware & 1.000 & 2.308 & -0.045 & 2.308 \\
RG & Revision model & 0.926 & 2.134 & -0.220 & 2.037 \\
RIMP & Adaptive direct ACI & 0.889 & 4.594 & 0.067 & 2.354 \\
RIMP & Bonferroni-Makarov & 0.926 & 4.450 & -0.077 & 2.788 \\
RIMP & Direct late absolute & 0.852 & 4.527 & 0.000 & 1.815 \\
RIMP & Direct late signed & 0.815 & 4.441 & -0.086 & 1.685 \\
RIMP & Exact strip OT & 0.852 & 5.701 & 1.174 & 2.226 \\
RIMP & Hybrid copula & 0.815 & 4.628 & 0.101 & 1.535 \\
RIMP & Independence convolution & 0.889 & 4.504 & -0.023 & 1.947 \\
RIMP & Revision aware & 0.852 & 4.525 & -0.002 & 1.765 \\
RIMP & Revision model & 0.889 & 4.506 & -0.021 & 1.947 \\
RINVBF & Adaptive direct ACI & 1.000 & 1.699 & -0.026 & 1.699 \\
RINVBF & Bonferroni-Makarov & 1.000 & 2.346 & 0.621 & 2.346 \\
RINVBF & Direct late absolute & 1.000 & 1.725 & 0.000 & 1.725 \\
RINVBF & Direct late signed & 0.963 & 1.591 & -0.134 & 1.555 \\
RINVBF & Exact strip OT & 0.889 & 1.810 & 0.084 & 1.729 \\
RINVBF & Hybrid copula & 0.815 & 1.528 & -0.197 & 1.117 \\
RINVBF & Independence convolution & 0.963 & 1.568 & -0.157 & 1.566 \\
RINVBF & Revision aware & 1.000 & 1.629 & -0.096 & 1.629 \\
RINVBF & Revision model & 0.963 & 1.569 & -0.156 & 1.567 \\
RINVRESID & Adaptive direct ACI & 1.000 & 2.633 & 0.279 & 2.633 \\
RINVRESID & Bonferroni-Makarov & 1.000 & 2.851 & 0.497 & 2.851 \\
RINVRESID & Direct late absolute & 1.000 & 2.354 & 0.000 & 2.354 \\
RINVRESID & Direct late signed & 1.000 & 2.193 & -0.161 & 2.193 \\
RINVRESID & Exact strip OT & 0.889 & 2.803 & 0.449 & 2.109 \\
RINVRESID & Hybrid copula & 0.926 & 1.677 & -0.677 & 1.539 \\
RINVRESID & Independence convolution & 1.000 & 2.071 & -0.283 & 2.071 \\
RINVRESID & Revision aware & 1.000 & 2.283 & -0.070 & 2.283 \\
RINVRESID & Revision model & 1.000 & 2.073 & -0.281 & 2.073 \\
ROUTPUT & Adaptive direct ACI & 1.000 & 1.412 & 0.090 & 1.412 \\
ROUTPUT & Bonferroni-Makarov & 1.000 & 2.296 & 0.975 & 2.296 \\
ROUTPUT & Direct late absolute & 1.000 & 1.321 & 0.000 & 1.321 \\
ROUTPUT & Direct late signed & 1.000 & 1.282 & -0.039 & 1.282 \\
ROUTPUT & Exact strip OT & 1.000 & 1.626 & 0.305 & 1.626 \\
ROUTPUT & Hybrid copula & 0.963 & 0.886 & -0.435 & 0.874 \\
ROUTPUT & Independence convolution & 1.000 & 1.357 & 0.035 & 1.357 \\
ROUTPUT & Revision aware & 1.000 & 1.310 & -0.011 & 1.310 \\
ROUTPUT & Revision model & 1.000 & 1.357 & 0.036 & 1.357 \\
YRGDI & Adaptive direct ACI & 1.000 & 3.212 & 1.496 & 3.212 \\
YRGDI & Bonferroni-Makarov & 1.000 & 2.890 & 1.173 & 2.890 \\
YRGDI & Direct late absolute & 1.000 & 1.717 & 0.000 & 1.717 \\
YRGDI & Direct late signed & 1.000 & 1.418 & -0.298 & 1.418 \\
YRGDI & Exact strip OT & 1.000 & 1.949 & 0.232 & 1.949 \\
YRGDI & Hybrid copula & 1.000 & 0.785 & -0.932 & 0.785 \\
YRGDI & Independence convolution & 1.000 & 1.435 & -0.282 & 1.435 \\
YRGDI & Revision aware & 1.000 & 1.392 & -0.324 & 1.392 \\
YRGDI & Revision model & 1.000 & 1.438 & -0.279 & 1.438 \\
\end{longtable}

The complete target-level graph is reported as Main Figure
\ref{fig:target-tradeoffs}; the table above supplies the corresponding numeric
coverage, score, and width entries for every target and method.
\FloatBarrier

\subsection{Historical and secondary benchmarks}
\label{app:s3-benchmarks}

Table~\ref{tab:track-b-secondary-benchmarks} compares the primary procedures
and additional rolling, parametric, and robust-scale benchmarks on one
historical sample shared by every listed method. These fixed-forecast
residual-band comparisons are implementable analogues, not replications of
vintage-based density forecasting systems.

\begin{table}[!htbp]
\centering
\footnotesize
\setlength{\tabcolsep}{3pt}
\caption{Historical comparison on common feasible origins}
\label{tab:track-b-secondary-benchmarks}
\begin{tabular}{lrrrrr}
\toprule
Method & \(N\) & Coverage & Norm. score & Diff. & Norm. width \\
\midrule
\multicolumn{6}{l}{\textit{Panel A: SPF}} \\
\multicolumn{6}{l}{\emph{Primary procedures}} \\
Direct late absolute & 1818 & 0.862 & 9.242 & 0.000 & 3.327 \\
Direct late signed & 1818 & 0.835 & 9.423 & 0.181 & 3.080 \\
Revision aware & 1818 & 0.864 & 9.145 & -0.096 & 3.312 \\
Bonferroni-Makarov & 1818 & 0.919 & 10.065 & 0.823 & 5.014 \\
Exact strip OT & 1818 & 0.846 & 9.724 & 0.482 & 3.619 \\
Hybrid copula & 1818 & 0.773 & 9.987 & 0.745 & 2.519 \\
Independence convolution & 1818 & 0.855 & 9.488 & 0.247 & 3.339 \\
Revision model & 1818 & 0.851 & 9.503 & 0.261 & 3.340 \\
Adaptive direct ACI & 1818 & 0.873 & 9.374 & 0.132 & 3.488 \\
\addlinespace
\multicolumn{6}{l}{\emph{Additional residual benchmarks}} \\
Rolling signed empirical & 1818 & 0.826 & 9.892 & 0.651 & 3.030 \\
Rolling Gaussian & 1818 & 0.868 & 10.949 & 1.707 & 4.794 \\
Gaussian same outcome & 1818 & 0.872 & 9.949 & 0.708 & 4.208 \\
Student-t same outcome & 1818 & 0.877 & 9.960 & 0.718 & 4.277 \\
Robust-scale same outcome & 1818 & 0.786 & 9.730 & 0.488 & 2.524 \\
\addlinespace
\multicolumn{6}{l}{\textit{Panel B: National accounts}} \\
\multicolumn{6}{l}{\emph{Primary procedures}} \\
Direct late absolute & 983 & 0.865 & 4.887 & 0.000 & 1.687 \\
Direct late signed & 983 & 0.809 & 4.964 & 0.077 & 1.389 \\
Revision aware & 983 & 0.887 & 4.703 & -0.183 & 1.691 \\
Bonferroni-Makarov & 983 & 0.925 & 5.139 & 0.252 & 2.687 \\
Exact strip OT & 983 & 0.866 & 5.070 & 0.183 & 1.998 \\
Hybrid copula & 983 & 0.759 & 5.056 & 0.169 & 1.106 \\
Independence convolution & 983 & 0.874 & 4.710 & -0.177 & 1.617 \\
Revision model & 983 & 0.874 & 4.711 & -0.176 & 1.619 \\
Adaptive direct ACI & 983 & 0.865 & 5.210 & 0.323 & 1.966 \\
\addlinespace
\multicolumn{6}{l}{\emph{Additional residual benchmarks}} \\
Rolling signed empirical & 983 & 0.799 & 4.981 & 0.095 & 1.324 \\
Rolling Gaussian & 983 & 0.858 & 5.241 & 0.355 & 1.899 \\
Gaussian same outcome & 983 & 0.858 & 5.187 & 0.300 & 1.843 \\
Student-t same outcome & 983 & 0.865 & 5.189 & 0.302 & 1.891 \\
Robust-scale same outcome & 983 & 0.779 & 5.056 & 0.169 & 1.217 \\
\bottomrule
\end{tabular}
\begin{minipage}{0.94\linewidth}
\footnotesize\emph{Notes:} Every row uses the same historical forecast origins
within an application. Diff. is normalized interval score minus the direct
late-absolute score, so negative values are lower. The additional rows are
fixed-forecast residual-band benchmarks rather than replications of
vintage-based predictive-density systems.
\end{minipage}
\end{table}

\FloatBarrier

\subsection{Monte Carlo parameters and additional results}
\label{app:s3-monte-carlo}

Main Section~\ref{sec:mc-transport} defines the data-generating process
and the five principal regimes. Table
\ref{tab:track-b-monte-carlo-parameters} gives the complete pre-specified
parameter schedule, including the four additional regimes reported here.
Subscripts 0 and 1 denote the pre-evaluation and evaluation segments;
\(\sigma_d\) is the revision-innovation scale, \(\beta_d\) is the coefficient
on the observed revision predictor, and Delay is the later-release lag. In the
tail-break regime, evaluation-period \(U_t\) is replaced by
\(2.5T_t/\sqrt{3}\), where \(T_t\sim t_3\). In the rounding regime, both the
early error and revision are rounded to the stated grid. Table
\ref{tab:track-b-monte-carlo} reports the regimes and procedures omitted from
the main display. The complete method-by-regime grid, tail misses, and
feasibility rates remain in the replication materials.

\begin{table}[!htbp]
\centering
\small
\setlength{\tabcolsep}{3pt}
\caption{Pre-specified Monte Carlo parameters}
\label{tab:track-b-monte-carlo-parameters}
\begin{tabular}{lrrrrrrrrrr}
\toprule
Regime & $\rho_0$ & $\rho_1$ & $\sigma_{e0}$ & $\sigma_{e1}$ & $\sigma_{d0}$ & $\sigma_{d1}$ & $\beta_d$ & Delay & Rounding & Tail scale \\
\midrule
Stable & 0.35 & 0.35 & 1.00 & 1.00 & 0.45 & 0.45 & 0.00 & 12 & 0.00 & 1.00 \\
Copula shift & 0.65 & -0.45 & 1.00 & 1.00 & 0.45 & 0.45 & 0.00 & 12 & 0.00 & 1.00 \\
Early-margin shift & 0.35 & 0.35 & 1.00 & 2.00 & 0.45 & 0.45 & 0.00 & 12 & 0.00 & 1.00 \\
Revision-margin shift & 0.35 & 0.35 & 1.00 & 1.00 & 0.45 & 1.10 & 0.00 & 12 & 0.00 & 1.00 \\
Joint shift & 0.55 & -0.35 & 1.00 & 1.70 & 0.45 & 0.90 & 0.00 & 12 & 0.00 & 1.00 \\
Paired scarcity & 0.35 & 0.35 & 1.00 & 1.00 & 0.45 & 0.45 & 0.00 & 60 & 0.00 & 1.00 \\
Predictable revision & 0.35 & 0.35 & 1.00 & 1.00 & 0.45 & 0.45 & 0.90 & 12 & 0.00 & 1.00 \\
Atoms and rounding & 0.35 & 0.35 & 1.00 & 1.00 & 0.45 & 0.45 & 0.00 & 12 & 0.50 & 1.00 \\
Delay and tail break & 0.35 & 0.35 & 1.00 & 1.00 & 0.45 & 0.45 & 0.00 & 60 & 0.00 & 2.50 \\
\bottomrule
\end{tabular}
\end{table}

\FloatBarrier
\begin{table}[!htbp]
\centering
\footnotesize
\caption{Additional Monte Carlo regimes and procedures}
\label{tab:track-b-monte-carlo}
\textit{Panel A: Additional regimes for the procedures in the main display}

\smallskip
\setlength{\tabcolsep}{2.2pt}
\begin{tabular}{@{}llrrrrrr@{}}
\toprule
Regime & Measure & Direct & \shortstack{Revision\\aware} & B--M &
\shortstack{Strip\\OT} & \shortstack{Revision\\model} &
\shortstack{Adaptive\\direct} \\
\midrule
Atoms and rounding & Coverage & 0.935 & 0.935 & 0.971 & 0.883 & 0.856 & 0.921 \\
 & Norm. score & 4.046 & 4.046 & 4.465 & 4.658 & 4.155 & 4.032 \\
\addlinespace[2pt]
Joint shift & Coverage & 0.852 & 0.854 & 0.942 & 0.906 & 0.850 & 0.848 \\
 & Norm. score & 5.345 & 5.342 & 6.006 & 6.118 & 5.687 & 5.365 \\
\addlinespace[2pt]
Paired scarcity & Coverage & 0.892 & 0.869 & 0.960 & 0.858 & 0.835 & 0.888 \\
 & Norm. score & 4.072 & 4.102 & 4.366 & 4.638 & 4.209 & 4.162 \\
\addlinespace[2pt]
Revision-margin shift & Coverage & 0.773 & 0.767 & 0.888 & 0.815 & 0.746 & 0.788 \\
 & Norm. score & 6.949 & 6.985 & 6.257 & 6.905 & 7.313 & 6.800 \\
\bottomrule
\end{tabular}

\medskip
\textit{Panel B: Additional procedures for the regimes in the main display}

\smallskip
\begin{tabular}{@{}llrrr@{}}
\toprule
Regime & Measure & \shortstack{Direct late\\signed} &
\shortstack{Hybrid\\copula} & \shortstack{Independence\\convolution} \\
\midrule
Stable & Coverage & 0.883 & 0.796 & 0.815 \\
 & Norm. score & 4.315 & 4.890 & 4.640 \\
\addlinespace[2pt]
Copula shift & Coverage & 0.981 & 0.888 & 0.938 \\
 & Norm. score & 3.255 & 3.169 & 3.035 \\
\addlinespace[2pt]
Early-margin shift & Coverage & 0.642 & 0.683 & 0.748 \\
 & Norm. score & 10.137 & 9.919 & 9.242 \\
\addlinespace[2pt]
Predictable revision & Coverage & 0.900 & 0.802 & 0.858 \\
 & Norm. score & 4.043 & 4.438 & 4.175 \\
\addlinespace[2pt]
Delay and tail break & Coverage & 0.715 & 0.700 & 0.748 \\
 & Norm. score & 12.813 & 12.977 & 12.488 \\
\bottomrule
\end{tabular}
\begin{minipage}{0.96\linewidth}
\footnotesize\emph{Notes:} Lower normalized scores are better. All
comparisons use common feasible origins. Panel A complements the main table
with the four omitted regimes; Panel B reports the three omitted primary
procedures for the five main regimes. The complete method-by-regime grid,
tail-miss rates, and feasibility rates are retained in the replication
materials.
\end{minipage}
\end{table}

\FloatBarrier

\end{document}